\def\reals{{I\kern-.35em R}}
\newcolumntype{C}{>{$\displaystyle}c<{$}}
\newtheorem{theorem}{Theorem}
\newtheorem{lemma}[theorem]{Lemma}
\title{\LARGE f-Betas and Portfolio Optimization with f-Divergence induced Risk Measures}
\author[$\star$]{Rui Ding}
\affil[$\star$]{\small Department of Applied Mathematics and Statistics, Stony Brook University, Stony Brook, NY, USA, 11794}
\affil[$\star$]{Correspondence: rui.ding.1@stonybrook.edu}
\date{}
\begin{document}
\maketitle
\abstract{In this paper, we build on using the class of f-divergence induced coherent risk measures for portfolio optimization and derive its necessary optimality conditions formulated in CAPM format. We derive a new f-Beta similar to the Standard Betas and also extended it to previous works in Drawdown Betas. The f-Beta evaluates portfolio performance under an optimally perturbed market probability measure, and this family of Beta metrics gives various degrees of flexibility and interpretability. We conduct numerical experiments using selected stocks against a chosen S\&P 500 market index as the optimal portfolio to demonstrate the new perspectives provided by Hellinger-Beta as compared with Standard Beta and Drawdown Betas. In our experiments, the squared Hellinger distance is chosen to be the particular choice of the f-divergence function in the f-divergence induced risk measures and f-Betas. We calculate Hellinger-Beta metrics based on deviation measures and further extend this approach to calculate Hellinger-Betas based on drawdown measures, resulting in another new metric which is termed Hellinger-Drawdown Beta. We compare the resulting Hellinger-Beta values under various choices of the risk aversion parameter to study their sensitivity to increasing stress levels. }

\section{Introduction}
The Capital Asset Pricing Model (CAPM) (\citet{10}, \citet{11}) is a fundamental model in portfolio theory and risk management. It is based on a  Markowitz mean-variance portfolio optimization problem (\citet{4}). Tremendous literature is available on CAPM, see for instance, critical review papers by \citet{G2004}, and \citet{R2016}. The Standard Beta relates the expected return of a security and the expected excess return of a market.
Since its introduction, Beta has been used as a key indicator of asset performance in portfolio management. The variance risk measure used in the standard CAPM formulation has a conceptual drawback: it does not distinguish losses and gains of a portfolio.
For this reason, \citet{Mark1959} considered Semi-Variance 
based only on negative returns, and the associated Beta was called
Downside Beta. Although the idea sounds conceptually attractive, Downside Beta and Standard Beta have close values therefore, it provides little information in addition to Standard Beta. In the vast literature on CAPM models, various non-symmetric risk measures have been proposed as an alternative to variance. In particular, Conditional Value-at-Risk (CVaR) is a popular choice of risk measure. It was introduced by \citet{5} for continuous distributions as the conditional expected loss exceeding Value-at-Risk (VaR) and generalized to discrete distributions in \citet{6}. CAPM has been extended to non-symmetric risk measures such as Generalized Deviations by \citet{14}, which demonstrated that CAPM equations are necessary optimality conditions for portfolio optimization problems. In particular, Beta was computed for CVaR and Lower Semi-Deviation (the square root of Semi-Variance). See the review paper by \citet{15} for detailed discussions on these and other non-symmetric risk measures, which provide formulas for Betas.

In this paper, we discuss a new Beta metric based on a recently studied family of return-based risk measures called f-divergence induced risk measures. This is a coherent risk measure for every choice f-divergence function and the statistical confidence radius. \citet{DP20} contains a detailed study of the properties and characterizations of this risk measure. Based on their theoretical results, as reviewed in Section 2, we discuss portfolio optimization problems with f-divergence induced risk measures in Section 3 and propose in Section 4 a family of Beta metrics called f-Beta that relates the losses of a stock to that of the market under given optimal perturbations as characterized through a dual maximization problem for the f-divergence induced risk measure. The f-divergence induced risk measures have more desirable properties as compared with CVaR risk or VaR. Compared with CVaR, which shares very similar definitions, the f-divergence induced risk measures are smoother, and the corresponding risk identifier functions are more continuous with respect to the changing risk aversion parameter. This means that attention (weights) on the stressful scenarios being considered are transitioning more gradually. Moreover, the choice of the generating divergence function provides flexibility in characterizing the shape of the risk measure and hence characterizes different types of risk aversion behaviors. In particular, we can interpret f-divergence induced risk measures as views on the worst-case expected loss under assumptions of various degrees of market distribution shift, where the distance between distributions is measured using well-defined statistical divergences/distances. Both facts make it more natural to be considered as a risk measure. Similar to CVaR under smaller tail risk levels (with the tail risk level of 1 being the risk-neutral expectation), f-divergence induced risk measures transition from risk-neutral expectation to worst-case loss as the risk aversion parameter(which is also a statistical confidence region radius) increases from zero to larger values, hence providing a complete and continuous middle-ground between risk-neutral and worst-case loss averse. See more related discussion in Section 5. We expect the desirable properties of f-divergence induced risk measures to translate to corresponding Beta metrics based on such risk measures in the CAPM framework. Numerical results of Hellinger-Beta, based on choosing squared Hellinger distance as the f-divergence function, is demonstrated in Section 6 using DOW 30 data.

We further build on f-Betas and extend them to previous work on Drawdown Betas. A considerable drawback of Variance, CVaR, Semi-Deviation, and many other risk measures is that they are static characteristics, which do not account for persistent consecutive portfolio losses (may be resulting in a large cumulative loss). The dynamic drawdown risk measure is actively used in  portfolio management as an alternative to static measures. Portfolio managers try to build portfolios with low drawdowns. The most popular drawdown characteristic is the Maximum Drawdown. However, the Maximum Drawdown is not the best risk measure from a practical perspective: it accounts for only one specific event on a price sample path. For instance,
\citet{G2017} suggested the so-called Conditional Expected Drawdown (CED), which is the tail mean of the maximum drawdown distribution. \citet{1} proposed Conditional Drawdown-at-Risk (CDaR), which averages a specified percentage of the largest portfolio drawdowns over an investment horizon. CDaR is defined as CVaR of the drawdown observations of the portfolio cumulative returns. CDaR possesses the theoretical properties of a deviation measure, see \citet{2}. \citet{13} developed CAPM relationships based on CDaR. The paper derived necessary optimality conditions for CDaR portfolio optimization, which resulted in a CDaR Beta relating cumulative returns of a market (optimal portfolio) and individual securities. \citet{DU22} extended this previous result to consider Expected Regret of Drawdowns (ERoD) based on a reformulation of the CDaR risk through expected regret where a clear interpretation of the chosen drawdown threshold can be provided instead of using a predefined tail risk level as in CDaR Beta. Their proposed metric is called ERoD Beta, which also relates consecutive returns of a security to those of the market. A negative ERoD Beta identifies a security that has positive returns when the market has drawdowns exceeding the threshold. See also \citet{DBW}. In Section 7, we propose to use an f-divergence induced risk measure for drawdown observations to obtain a new f-Drawdown Beta and demonstrate their new perspectives via numerical results.

\section{Preliminaries on f-Divergence induced Risk Measures: Formulations and Properties}

We provide in this section a brief review of the f-divergence induced coherent risk measures and their properties, as introduced in \citet{DP20} in greater detail. The family of f-divergence induced risk measures is defined for each divergence generating function $\phi$ on domain($\phi$) = $[0,\infty)$ which is convex, lower semicontinuous, and satisfies $\phi(1)=0$. For two probability measures $P,Q$ on the sample space $\Omega$, the corresponding f-divergence generated by $\phi$ is defined as: 
\begin{equation}\label{f_divergence_def}
D_\phi(Q||P) = \int_{\Omega} \phi(\frac{dQ}{dP}) dP
\end{equation}
if $Q\ll P$ and $\infty$ otherwise.  The convex conjugate of $\phi$ is defined as $\psi(y) = \sup_{z\in \mathbb{R}} \{yz-\phi(z)\}$. These two functions satisfy the Fenchel–Young inequality among other important properties:
$$xy\leq \phi(x)+\psi(y)$$

For each pair of conjugate functions $\phi,\psi$ satisfying previous assumptions, the f-divergence induced risk measure is defined through the following minimization problem, where $\delta>0$ is a risk aversion parameter:
\begin{equation}\label{f_risk_p}
\rho_{\phi,\delta}(X) = \inf_{\mu\in \mathbb{R},t>0} t(\delta+\mu+\mathbb{E}[\psi(\frac{X}{t}-\mu)])
\end{equation}
The divergence function $\phi$ characterizes the shape
of risk aversion for increasing risk, while the risk aversion coefficient $\delta$ describes the tendency
of an investor to avoid risk. This risk measure is well defined for all $X\in L^1$, satisfying the inequality $\mathbb{E}[X]\leq\rho_{\phi,\delta}(X)$. It is law invariant but can be unbounded. As a remark, we know that $\phi(x)$ and $\tilde{\phi}(x) = \phi(x)+C(x-1),\forall C\in\mathbb{R}$ are both generating functions for the same divergence function due to conservation of probabilities. Their conjugate functions satisfy $\tilde{\psi}(x) = \psi(x+C)-C$, which leads to the same risk measure $\rho_{\phi,\delta}(X) = \rho_{\tilde{\phi},\delta}(X)$ from the primal problem \eqref{f_risk_p} for the same loss random variable $X$.

In Tables \ref{f_divergences} and \ref{f_conjugates} (tables adapted from \citet{N16}), we list some examples of f-divergences, including the well-known KL-divergence, total variation distance, and squared Hellinger distance. For each f-divergence listed, the generator function $\phi$ and its conjugate function $\psi$ are shown in the tables in accordance with notations of \citet{DP20}.

\begin{table}[!ht]
\renewcommand\thetable{1}
    \centering
      \caption{f-Divergences and Respective Generating Functions}
      \scalebox{0.9}{
       \begin{tabular}{|| c | c | c ||} 
 \midrule
Name & $D_\phi(P\|Q)$ & Generator $\phi(u)$ \\
\midrule
Total Variation & $\frac{1}{2}\int |p(x)-q(x)|dx$ & $\frac{1}{2}|u-1|$\\
\midrule
Kullback-Leibler & $\int p(x)\log{\frac{p(x)}{q(x)}}dx$ & $u\log{u}$\\
\midrule
Reverse Kullback-Leibler & $\int q(x)\log{\frac{q(x)}{p(x)}}dx$ & $-\log{u}$\\
\midrule
Pearson $\chi^2$ & $\int \frac{(p(x)-q(x))^2}{q(x)}dx$ & $(u-1)^2$\\
\midrule
Neyman $\chi^2$ & $\int \frac{(p(x)-q(x))^2}{p(x)}dx$ & $\frac{(u-1)^2}{u}$\\
\midrule
Squared Hellinger & $\int (\sqrt{p(x)}-\sqrt{q(x)})^2 dx$ & $(\sqrt{u}-1)^2$\\
\midrule
Jensen-Shannon & $\frac{1}{2}\int (p(x)\log{\frac{2p(x)}{p(x)+q(x)}}+q(x)\log{\frac{2q(x)}{p(x)+q(x)}})dx$ & $-(u+1)\log{\frac{1+u}{2}}+u\log{u}$\\
\midrule
$\alpha$-Divergence ($\alpha\notin \{0,1\}$) & $\frac{1}{\alpha(\alpha-1)}\int (p(x)(\frac{q(x)}{p(x)})^\alpha-\alpha q(x)-(1-\alpha)p(x))dx$ & $\frac{1}{\alpha(\alpha-1)}(u^\alpha -1-\alpha(u-1))$\\
\midrule
\end{tabular}}
\label{f_divergences}
\end{table}

\begin{table}[!ht]
    \centering
      \caption{f-Divergence Generating Function Domains and Conjugate Functions}
      \scalebox{1}{
       \begin{tabular}{|| c | c | c ||} 
\midrule
Name & $dom(\psi)$ & Conjugate $\psi(t)$ \\
\midrule
Total Variation & $-\frac{1}{2}\leq t\leq\frac{1}{2}$ & $t$\\
\midrule
Kullback-Leibler & $\mathbb{R}$ & $e^{t-1}$\\
\midrule
Reverse Kullback-Leibler & $\mathbb{R}_{-}$ & $-1-\log{(-t)}$\\
\midrule
Pearson $\chi^2$ & $\mathbb{R}$ & $\frac{1}{4}t^2+t$\\
\midrule
Neyman $\chi^2$ & $t<1$ & $2-2\sqrt{1-t}$\\
\midrule
Squared Hellinger & $t<1$ & $\frac{t}{1-t}$\\
\midrule
Jensen-Shannon & $t<\log{2}$ & $-\log{(2-e^{t})}$\\
\midrule
$\alpha$-Divergence ($\alpha<1,\alpha\neq 0$) & $t<\frac{1}{1-\alpha}$ & $\frac{1}{\alpha}(t(\alpha-1)+1)^{\frac{\alpha}{\alpha-1}} - \frac{1}{\alpha}$\\
\midrule
$\alpha$-Divergence ($\alpha>1$) & $\mathbb{R}$ & $\frac{1}{\alpha}(t(\alpha-1)+1)^{\frac{\alpha}{\alpha-1}} - \frac{1}{\alpha}$\\
\midrule
\end{tabular}}
\label{f_conjugates}
\end{table}

The f-divergence induced risk measure $\rho_{\phi,\delta}: L^1\to R\cup\{\infty\}$ satisfies the following four axioms:

1. Monotonicity: $\rho(X_1) \leq \rho(X_2)$ provided that $X_1 \leq X_2$ almost surely.

2. Translation equivariance: $\rho(X + c) = \rho(X) + c$ for any $X\in L^1$ and $c \in \mathbb{R}$.

3. Subadditivity: $\rho(X_1 + X_2) \leq \rho(X_1) + \rho(X_2)$ for all $X_1, X_2 \in L^1$.

4. Positive homogeneity: $\rho(\lambda X) = \lambda \rho(X)$ for all $X \in L^1$ and $\lambda > 0$.

Hence it is a coherent risk measure by definition. It also satisfies $\forall 0<\delta_1\leq\delta_2$, 
$$\rho_{\phi,\delta_1}(X)\leq \rho_{\phi,\delta_2}(X), \forall X\in L^1$$
Conversely, for any non-negative random variable $X\geq 0$, it holds that, $$\rho_{\phi,\delta_2}(X)\leq \frac{\delta_2}{\delta_1}\rho_{\phi,\delta_1}(X)$$

The most important property of the f-divergence induced risk measures is that they can be equivalently stated through a dual representation involving a maximization problem. For a given divergence generating function $\phi$ and its convex conjugate $\psi$, define the Orlicz space $L^\psi$ as $L^\psi = \{X\in L^0: \mathbb{E}[\psi(t|X|)] < \infty$ for some $t > 0\}$. Dommel and Pichler \cite{DP20} showed that for every $X\in L^{\psi}$, the f-divergence induced risk measure has the dual representation:
\begin{equation}\label{f_risk_dual}
\rho_{\phi,\delta}(X) = \sup_{Z\in M_{\phi,\delta}} \mathbb{E}[XZ]
\end{equation}
where $M_{\phi,\delta} = \{Z\in L^1:Z\geq 0,\mathbb{E}[Z]=1,\mathbb{E}[\phi(Z)]\leq\delta\}$. The above dual representation in \eqref{f_risk_dual} can be alternatively expressed as:

\begin{equation}\label{f_risk_dual2}
\rho_{\phi,\delta}(X) = \sup_{Q\ll P: D_\phi(Q||P)\leq\delta} \mathbb{E}_Q[X]
\end{equation}
The proposed risk can therefore be interpreted
as the largest expected value over all probability measures $Q$ within an f-divergence ball around $P$. The divergence function $\phi$ characterizes the shape of the ball, while $\delta$ determines the radius. Under suitable conditions where $\bar{\alpha} = \max\{\alpha\in[0,1): \phi(0)\alpha+\phi(\frac{1}{1-\alpha})(1-\alpha)\leq\delta\}$ and $\mathbb{P}(X=ess sup(X))<1-\bar{\alpha}$ holds true, then the infimum in the defining equation \eqref{f_risk_p} of the risk measure is attained. Again, one can make the remark that for generating functions $\phi,\tilde{\phi}$ which generate the same divergence, the resulting risk measure is the same since the feasible region in \eqref{f_risk_dual2} is the same. As an additional remark, Entropic Value-at-Risk (EVaR), see \cite{A12} and \cite{AF19}, is a frequently used coherent risk measure in finance and engineering, defined as:
\begin{equation}\label{EVaR}
EVaR_{1-\alpha}(X)= \sup_{Q\ll P: D_{KL}(Q||P)\leq -\ln{\alpha}} \mathbb{E}_Q[X]
\end{equation}
EVaR is an example of the family of f-divergence induced risk measures by choosing the Kullback-Leibler divergence (KLD) to be the f-divergence and the confidence radius to be $\delta:=-\ln{\alpha}$ for some confidence level $1-\alpha\in[0,1)$. When $X\sim N(\mu,\sigma^2)$ is a univariate Gaussian distributed loss random variable, EVaR takes the analytical form $EVaR_{1-\alpha}(X) = \mu+\sigma\sqrt{-2\ln{\alpha}}$. Similarly, an analytic formula can be computed for a uniform random variable.

The f-divergence induced risk measure also has the representation in terms of spectral risk measures \cite{DP20}, which is equivalent to the Kusuoka representation:
\begin{equation}\label{kusuoka}
\rho_{\phi,\delta}(X) = \sup_{\sigma} \int_0^1 \sigma(u)F^{-1}_X(u)du
\end{equation}
where the supremum is taken over all non-decreasing $\sigma:[0,1]\to[0,\infty]$ with $\int_0^1 \sigma(u)du=1$ and $\int_0^1 \phi(\sigma(u))du\leq\delta$. Notice that every function of the form $\rho_\sigma(X)= \int_0^1 \sigma(u)F^{-1}_X(u)du$ where $\sigma:[0,1]\to[0,\infty]$ with $\int_0^1 \sigma(u)du=1$ is a coherent risk measure itself, hence the spectral representation is useful for deriving bounds between the f-divergence induced risk measure and its spectral components. An important special example is the CVaR risk functional, which has the following spectral form:
$$CVaR_\alpha(X) = \rho_{\sigma_\alpha}(X) = \int_0^1 \sigma_\alpha(u)F_X^{-1}(u)du = \frac{1}{1-\alpha}\int_{\alpha}^1 F_X^{-1}(u)du, \forall \alpha\in(0,1)$$
where the function $\sigma_\alpha(u) = \frac{1}{1-\alpha}\mathbbm{1}_{[\alpha,1]}(u)$
is in the spectral set of the f-divergence induced risk measure (over which the supremum is taken) if the following holds: $\phi(0)\alpha+\phi(\frac{1}{1-\alpha})(1-\alpha)\leq\delta$. Hence under this condition, we can relate the corresponding CVaR risk to the f-divergence induced risk by the inequality:
$$CVaR_\alpha(X)\leq\rho_{\phi,\delta}(X),\forall X\in L^\psi$$
Dommel and Pichler \cite{DP20} elaborate on the optimality inside of the minimization representation \eqref{f_risk_p} and the dual maximization representation \eqref{f_risk_dual} based on facts concerning the ‘derivatives’ of the convex function $\phi$ and its conjugate $\psi$. 
Let $X\in L^\psi$ and suppose $t^\star,\mu^\star$ solve the characterizing equations:
\begin{equation}\label{chEq}
\begin{aligned}
\mathbb{E}[\psi'(\frac{X}{t}-\mu)]=1; \\
\mathbb{E}[\phi(\psi'(\frac{X}{t}-\mu))] = \delta
\end{aligned}
\end{equation}
Then they are the optimal values in the primal minimization problem. Furthermore, the random variable $Z^\star= \psi'(\frac{X}{t^\star}-\mu^\star)$ is optimal in the dual maximization problem, i.e., $Z^\star\in M_{\phi,\delta}$ and,
$$\sup_{Z\in M_{\phi,\delta}}\mathbb{E}[XZ] = \mathbb{E}[XZ^\star] = t^\star(\delta+\mu^\star+\mathbb{E}[\psi(\frac{X}{t^\star}-\mu^\star)])$$

Lastly, the primal formulation of the f-divergence induced risk measure can be efficiently used in portfolio optimization problem objectives where we can solve a single augmented minimization problem with only two additional variables instead of two nested optimization problems when using the dual maximization problem representation or its alternative forms. With the primal formulation, we solve for:
$$\min_{x\in X} \rho_{\phi,\delta}(\mathcal{L}(x)) = \min_{x\in X} \inf_{\mu\in \mathbb{R},t>0} t(\delta+\mu+\mathbb{E}[\psi(\frac{\mathcal{L}(x)}{t}-\mu)])$$
Here the portfolio loss random variable is $\mathcal{L}(x) = -x^T R$ where $R$ is a random vector of returns for the $N$ assets in consideration with expected values $r$. The feasible region $X$ can be chosen using standard constraints on the (long-only) portfolio weights in combination with a target return constraint with threshold $\bar{r}$ such as:
$$X = \{x\in\mathbb{R}^N: \mathbbm{1}^T x = 1,x\geq 0,r^T x\geq \bar{r}\}$$
Similar to the discussion before, $\phi$ controls the shape of the risk function and hence the sensitivity of the solution to the return distribution and the desired risk level $\delta$. Larger $\delta$ implies higher risk aversion and more conservative portfolio choices.

In the following sections, we can study CAPM-type necessary optimality conditions of optimization problems based on $\rho_{\phi,\delta}$ risk objectives and derive corresponding Beta values for assets in a market under CAPM assumptions, which generalizes the Standard Beta and provides an alternative to CVaR Betas and CVaR-based Drawdown Betas as discussed in \citet{DU22}, \cite{13}.

\section{f-Divergence induced Risk Measure and Portfolio Optimization}

Let us denote by $r(x)$ the vector of historic returns of a portfolio with weights vector $x$. Also, we  denote by
$l(x) = -r(x)$ the associated loss observations.

We denote
\begin{itemize}
\item 
$x= (x^1, \ldots,x^I)\;=\;$   vector of weights for $I$ assets in the portfolio;
\item
$(r_{st}^1,\ldots,r_{st}^I)\;=\;$ vector of returns of portfolio assets at time moment $t$ on scenario $s$;
\item
$(l_{st}^1,\ldots,l_{st}^I)\;=\;(-r_{st}^1,\ldots,-r_{st}^I)\;=\;$ corresponding vector of losses of portfolio assets;
\item
$p_s\;=\;$ probability of the scenario (sample path of returns of securities);
\item
$r_{st}(x)= \sum _{i=1}^{I} r_{st}^i x^i\;=\;$ portfolio return at time  moment $t$ on scenario $s$;
\item
$r(x)=$ vector of portfolio returns with components
$r_{st}(x)\,$, $\,s=1,\ldots,S;\, t=1,\ldots,T\,$;
\item
$l_{st}(x)= -r_{st}(x) =\;$ portfolio loss at time  moment $t$ on scenario $s$;

\end{itemize}
Following \citet{13}, we state the $\rho_{\phi,\delta}$ risk (multiple paths) minimization problem over $T$ periods subject to a constraint that the portfolio's expected return exceeds a given target $\bar{r}$:
\begin{equation}
\label{fdrmin}
\begin{aligned}
& &\underset{x}{\text{min}}
& & \rho_{\phi,\delta}(l(x)) 
& & s.t.\;\; \frac{1}{T}\sum_{s=1}^{S}p_s\sum_{t=1}^T l_{st}(x) \leq -\bar{r}\;.
\end{aligned}
\end{equation}
This problem is similar to a Markowitz mean-variance optimization  with variance replaced by the f-divergence induced risk $\rho_{\phi,\delta}$.

The above minimization problem \eqref{fdrmin} is equivalent to the maximization problem below, where a given target $\nu$ specifies a constraint on the portfolio risk,
\begin{equation}
\label{fdrmax}
\begin{aligned}
& &\underset{x}{\text{max}}
& & -\frac{1}{T}\sum_{s=1}^{S}p_s\sum_{t=1}^T l_{st}(x)
& & s.t.\;\;\;\rho_{\phi,\delta}(l(x))\leq v \;,
\end{aligned}
\end{equation}
in the sense that the efficient frontiers of these two problems \eqref{fdrmin} and (\ref{fdrmax}) coincide.

\section{CAPM: Necessary Optimality Conditions for Portfolio Optimization with f-Divergence induced Risk Measures}
Similar to \citet{13} and \citet{DU22}, we provide necessary optimality conditions for optimization problems \eqref{fdrmin} and (\ref{fdrmax})
in the form of CAPM equations. In particular, the formula for f-Beta can be derived similarly to the Standard Beta, which relates returns of the market and individual assets. Let $l^M$ be the loss associated with the optimal(market) portfolio, the necessary optimality conditions for the solution $x^\star$ of both problems \eqref{fdrmin} and (\ref{fdrmax}) are stated in the form of CAPM:
\begin{equation}
\label{CAPMeq}
\frac{1}{T}\sum_{s=1}^{S} p_s \sum_{t=1}^T r_{st}^i = \beta_{\rho,\delta}^i\frac{1}{T}\sum_{s=1}^{S}p_s\sum_{t=1}^T r_{st}^M\;,
\end{equation}
\begin{equation}
\label{beta}
\beta_{\rho,\delta}^i = \frac{\sum_{s=1}^{S}\sum_{t=1}^{T}p_s q_{st}^\star l_{st}^i}{\rho_{\phi,\delta}(l^M)}\;,
\end{equation}

The f-Beta CAPM equation (\ref{CAPMeq}) relates the expected returns of the market and instruments under a constrained perturbation measure. On the efficient frontier with the $\rho_{\phi,\delta}$ risk measure against the target return, the optimal solution $x^\star$ is the point where the capital asset line makes a tangent cut with the efficient frontier.

\begin{theorem}
\label{thm1}
\noindent
Let $r^M = r(x^\star)$ be the return vector for an optimal portfolio $x^\star$ of the problem \eqref{fdrmin}.
The necessary optimality conditions for \eqref{fdrmin} can be stated in the form of CAPM:

$$\frac{1}{T}\sum_{s=1}^{S} p_s \sum_{t=1}^T r_{st}^i = \beta_{\rho,\delta}^i \frac{1}{T}\sum_{s=1}^{S}p_s\sum_{t=1}^T r_{st}^M\;,$$

\begin{equation}
\label{fbeta2}
\beta_{\rho,\delta}^i = \frac{\sum_{s=1}^{S}\sum_{t=1}^{T}p_s q_{st}^\star l_{st}^i}{\rho_{\phi,\delta}(l^M)}\;,
\end{equation}
where
\begin{itemize}

\item $\beta_{\phi,\delta}^i\;=\;$ f-Beta relating the expected returns,
$\frac{1}{T}\sum_{s=1}^{S}p_s \sum_{t=1}^T r_{st}^M$,
of the optimal portfolio (market) and expected return, $\frac{1}{T}\sum_{s=1}^{S} p_s \sum_{t=1}^T r_{st}^i$, of security $i$;

\item $l^M = \{l^M_{st}\}_{s,t=1}^{S,T}$ are the loss observations associated with the optimal(market) portfolio for all $s=1,\ldots,S,t=1,\ldots,T$;

\item $\rho_{\phi,\delta}(l^M) = \max_{\{q_{st}\}_{s,t=1}^{S,T}\in\mathcal{Q}_{\phi,\delta}}\sum_{s=1}^S\sum_{t=1}^T p_s q_{st} l^M_{st}$ is the f-divergence induced risk for the optimal(market) portfolio, where $\mathcal{Q}_{\phi,\delta} = \{\{q_{st}\}_{s,t=1}^{S,T}|\sum_{s=1}^S\sum_{t=1}^T p_s q_{st}=1,\sum_{s=1}^S p_s D_{\phi}(q_s||\bar{p})\leq\delta\}$ is the set of feasible perturbation measures under f-divergence function $\phi$ and confidence radius $\delta$, where $\bar{p}_t = \frac{1}{T},\forall t=1,\ldots,T$ is the empirical probability vector for $T$ observations in a single historic path;

\item $q^\star_{st} = $ optimal perturbed measure from $\mathcal{Q}_{\phi,\delta}$ which maximizes the perturbed market expected loss as in the dual representation of $\rho_{\phi,\delta}(l^M)$;
\end{itemize}
Consequently, for a single path: 
$$
\frac{1}{T}\sum_{t=1}^T r^i_t \;=\; \beta_{\phi,\delta}^i\, \frac{1}{T}\sum_{t=1}^T r_t^M\;,
$$
\begin{equation}
\label{fbeta_3}
\beta_{\phi,\delta}^i\; =\;\frac{\sum_{t=1}^{T}q_{t}^\star l_t^i }{\rho_{\phi,\delta}(l^M)}\;,
\end{equation}

where
\begin{itemize}
\item $\beta_{\phi,\delta}^i\;=\;$ f-Beta relating the expected returns,
$\frac{1}{T}\sum_{t=1}^T r_{t}^M$,
of the optimal portfolio (market) and expected return, $\frac{1}{T}\sum_{t=1}^T r_{t}^i$, of security $i$ in a single path;
\item
$\rho_{\phi,\delta}(l^M) = \max_{\{q_{t}\}_{t=1}^{T}\in\mathcal{Q}_{\phi,\delta}}\sum_{t=1}^T q_{t} l^M_{t}$ is the f-divergence induced risk for the optimal(market) portfolio, where $\mathcal{Q}_{\phi,\delta} = \{\{q_{t}\}_{t=1}^{T}|\sum_{t=1}^T q_{t}=1,D_{\phi}(q||\bar{p})\leq\delta\}$ is the set of feasible perturbation measures under f-divergence function $\phi$ and confidence radius $\delta$, where $\bar{p}_t = \frac{1}{T},\forall t=1,\ldots,T$ is the empirical probability vector for $T$ observations in a single historic path;
\item
$q^\star_{t}$ = optimal perturbed measure from $\mathcal{Q}_{\phi,\delta}$ which maximizes the perturbed market expected loss as in the dual representation of $\rho_{\phi,\delta}(l^M)$ for a single path;   

\end{itemize}
\end{theorem}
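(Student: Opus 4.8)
The plan is to treat \eqref{fdrmin} as a finite-dimensional convex program and read off its Karush--Kuhn--Tucker conditions, using the dual (support-function) representation of $\rho_{\phi,\delta}$ to identify the relevant subgradient. I will give the argument for a single path; the multiple-path statement is identical after replacing the index $t$ by the pair $(s,t)$ and the empirical weight $\frac1T$ by $\frac{p_s}{T}$, so that the numerator $\sum_t q_t^\star l_t^i$ becomes $\sum_s\sum_t p_s q_{st}^\star l_{st}^i$ as in \eqref{fbeta2}, and it requires no new idea. Since $l_t(x)=\sum_i x^i l_t^i$ is linear in $x$ and $\rho_{\phi,\delta}$ is a coherent (hence convex) risk measure, the objective $x\mapsto\rho_{\phi,\delta}(l(x))$ is convex and the return constraint $\frac1T\sum_t l_t(x)\le-\bar r$ is affine; assuming a Slater point, the KKT conditions apply with a single multiplier $\lambda\ge0$ for the return constraint.

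First I would rewrite the objective, on the finite scenario space, through the discrete form of the dual representation \eqref{f_risk_dual2},
\[
\rho_{\phi,\delta}(l(x))=\max_{q\in\mathcal{Q}_{\phi,\delta}}\sum_{t=1}^T q_t\,l_t(x)
=\max_{q\in\mathcal{Q}_{\phi,\delta}}\sum_{i}x^i\Big(\sum_{t=1}^T q_t l_t^i\Big),
\]
so that $\rho_{\phi,\delta}(l(\cdot))$ is the support function of the convex compact image of $\mathcal{Q}_{\phi,\delta}$ (a closed subset of the simplex) under $q\mapsto(\sum_t q_t l_t^i)_i$. A maximizer $q^\star$ exists by compactness; in the general $L^\psi$ setting this attainment and its first-order form are exactly the content of \eqref{chEq} from \citet{DP20}. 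By Danskin's theorem, the subdifferential of $x\mapsto\rho_{\phi,\delta}(l(x))$ at the optimum $x^\star$ is generated by the perturbed expected losses evaluated at the maximizer $q^\star$ of the \emph{market} loss $l^M=l(x^\star)$; writing $g^i:=\sum_t q_t^\star l_t^i$ and $d^i:=\frac1T\sum_t l_t^i$, the stationarity condition $0\in\partial_{x^i}\big[\rho_{\phi,\delta}(l(x))+\lambda(\frac1T\sum_t l_t(x)+\bar r)\big]$ reads $g^i+\lambda\,d^i=0$ for $i=1,\dots,I$.

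This last relation is the heart of the matter: it forces $g^i=-\lambda d^i$ for every asset, so the ratio $g^i/d^i$ is constant in $i$. Taking the portfolio combination $\sum_i x^{\star i}(\cdot)$ and using $l_t^M=\sum_i x^{\star i}l_t^i$ yields $g^M:=\sum_i x^{\star i}g^i=\sum_t q_t^\star l_t^M=\rho_{\phi,\delta}(l^M)$, because $q^\star$ attains the market risk, and likewise $d^M:=\sum_i x^{\star i}d^i=\frac1T\sum_t l_t^M$, so $g^M=-\lambda d^M$. Dividing gives
\[
\frac{\sum_t q_t^\star l_t^i}{\rho_{\phi,\delta}(l^M)}=\frac{\frac1T\sum_t l_t^i}{\frac1T\sum_t l_t^M}=:\beta_{\phi,\delta}^i,
\]
and replacing losses by returns ($r=-l$) turns this into the claimed identity $\frac1T\sum_t r_t^i=\beta_{\phi,\delta}^i\,\frac1T\sum_t r_t^M$ with $\beta_{\phi,\delta}^i$ exactly as in \eqref{fbeta_3}. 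Since the return constraint binds at the optimum, $\lambda>0$, keeping all denominators nonzero.

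The step I expect to be the main obstacle is the rigorous justification of the subgradient formula: that differentiating the maximized objective produces precisely $g^i=\sum_t q_t^\star l_t^i$ with a \emph{single} perturbation $q^\star$ that simultaneously attains the market risk $\rho_{\phi,\delta}(l^M)$ and supplies the subgradient for every asset. In the finite-scenario model this is a clean application of Danskin's theorem with compactness of $\mathcal{Q}_{\phi,\delta}$, but I must still verify a constraint qualification so that the multiplier $\lambda$ exists, and check that when the dual maximizer is non-unique the CAPM relation holds for the selection $q^\star$ entering \eqref{fbeta_3}. The attainment and characterization \eqref{chEq} of \citet{DP20} are what guarantee a usable $q^\star$ beyond the purely finite case.
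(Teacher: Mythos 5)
Your proposal is correct and follows essentially the same route as the paper's own proof: both apply the necessary optimality (KKT) conditions for the convex program \eqref{fdrmin}, identify the subgradient of $x\mapsto\rho_{\phi,\delta}(l(x))$ via the dual representation as $g^i=\sum_t q_t^\star l_t^i$ with $q^\star$ the optimal risk identifier for the market loss, and then use the portfolio combination $\sum_i x^{\star i}g^i=\rho_{\phi,\delta}(l^M)$ to eliminate the multiplier and obtain the CAPM form. The only differences are cosmetic (you invoke Danskin's theorem and divide $g^i$ by $g^M$ directly, while the paper cites standard convex-analysis results and explicitly solves for $\lambda^\star$ before substituting back), and your added attention to the constraint qualification, $\lambda>0$, and non-uniqueness of $q^\star$ is care the paper itself glosses over.
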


\begin{proof}
Let us denote by $\rho_{\phi,\delta}(l(x)) = \max_{\{q_{st}\}_{s,t=1}^{S,T}\in\mathcal{Q}_{\phi,\delta}} \sum_{s=1}^{S}\sum_{t=1}^{T} p_s q_{st} l_{st}(x)$ the objective function and by $x^\star$ an optimal portfolio vector of problem \eqref{fdrmin}.
The expected loss function $l_{st}(x)$ linearly depends upon vector $x$. The objective function $\rho_{\phi,\delta}(l(x))$ is convex in $x$. 
The necessary optimality condition for the convex optimization problem \eqref{fdrmin} is formulated as follows (see, for reference, Theorem 3.34 in \citet{16}):
\begin{equation}
\label{neces}
\lambda^\star\, \nabla_{\!x}  (\frac{1}{T}\sum_{s=1}^{S}p_s \sum_{t=1}^T r_{st}(x^\star)-\bar{r})\;\in\; \partial_x \rho_{\phi,\delta}(l(x^\star)) \;,
\end{equation}
where
\begin{itemize}
\item 
$\nabla_{\!x}  (\frac{1}{T}\sum_{s=1}^{S}p_s\sum_{t=1}^T r_{st}(x^\star)-\bar{r})\;=\;$ gradient of the constraint function at $x=x^\star$;
\item
$\lambda^\star\;=\;$ Lagrange multiplier such that $\lambda^\star\geq 0$ and $\lambda^\star(\frac{1}{T}\sum_{s=1}^S\sum_{t=1}^T p_s r_{st}(x^\star)-\bar{r})=0\;$;
\item
$\partial_x \rho_{\phi,\delta}(l(x^\star)) \;=\;$ subdifferential of convex in $x$ function $\rho_{\phi,\delta}(l(x))$  at $x=x^\star$.
\end{itemize}
The gradient of the constraint function, which is linear in $x$, equals:
\begin{equation}
\label{grfdr}
\nabla_{\!x}  (\frac{1}{T}\sum_{s=1}^{S}p_s \sum_{t=1}^T r_{st}(x^\star)-\bar{r})= 
 \frac{1}{T}\sum_{s=1}^S \sum_{t=1}^T p_s \nabla_{\!x}  r_{st}(x)=
 \sum_{s=1}^S p_s(\frac{1}{T}\sum_{t=1}^T r^1_{st},\ldots,\frac{1}{T}\sum_{t=1}^T r^n_{st})\;.
\end{equation} 
According to the standard results in convex analysis,
\begin{equation}
\label{inclu}
g = (g^1,\ldots,g^n) \in \partial_x \rho_{\phi,\delta}(l(x^\star))\;,\;\;\; \text{where}\;\;\;
g^i= \sum_{s=1}^S\sum_{t=1}^{T}p_s q_{st}^\star l^i_{st} \;.
\end{equation}
With (\ref{grfdr}) and (\ref{inclu}) we obtain the following system of equations
\begin{equation}
\label{equ}
g^i= \sum_{s=1}^S\sum_{t=1}^{T}p_s q_{st}^\star l^i_{st} =\lambda^\star \frac{1}{T}\sum_{s=1}^S\sum_{t=1}^T p_s r^i_{st}\;,\;\;\;i=1,\ldots,I \;.
\end{equation}
Multiplying the left and right hand sides of the previous equation by the optimal $x^\star_i$ and summing up terms for $i=1,\ldots,I$, we have:
$$
\sum_{s=1}^S\sum_{t=1}^{T}p_s q_{st}^\star l_{st}(x^\star) =\lambda^\star \frac{1}{T}\sum_{s=1}^S\sum_{t=1}^T p_s r_{st}(x^\star)\;.
$$
Consequently,
\begin{equation}
\label{equ1}
\lambda^\star=\frac{\sum_{s=1}^S\sum_{t=1}^{T}p_s q_{st}^\star l_{st}(x^\star)}{\frac{1}{T}\sum_{s=1}^S p_s \sum_{t=1}^T r_{st}(x^\star)}\;.
\end{equation}
Substituting
(\ref{equ1}) to (\ref{equ})
gives necessary conditions (\ref{CAPMeq}), (\ref{fbeta2})
in CAPM format, where we denote the optimal portfolio as the market portfolio $l(x^\star) = l^M,r(x^\star) = r^M$, and the optimal risk is $\rho_{\phi,\delta}(x^\star) =\sum_{s=1}^S\sum_{t=1}^{T}p_s q_{st}^\star l_{st}(x^\star) $.
\end{proof}

\section{Discussion of f-Divergence induced Risk and f-Betas}
Compared with other coherent risk measures such as CVaR, f-divergence induced risk measures have the advantage that it is the result of a well-defined distributionally robust optimization problem under probability model uncertainty as specified through information-theoretic distances or divergences. The risk measured in this formulation has a direct meaning related to the choice of f-divergence functions.

The flexibility in the choices of $\phi,\delta$ allows for a spectrum of solutions that are comparable and interpretable. In particular, the relations between $\rho_{\phi,\delta}$ and CVaR gives a direct comparison between Betas calculated based on these two risk measures, respectively.

In the following section, we use a particular f-divergence function to illustrate the behavior of the proposed Beta metrics. Consider the following symmetric f-divergence called the squared Hellinger distance (\citet{YL}) for two univariate probability distributions $P,Q$ with densities $p,q$:
\begin{equation}\label{h2}
D_{H^2}(Q||P) = D_{H^2}(P||Q) := H^2(P,Q)=\frac{1}{2}\int_x (\sqrt{q(x)}-\sqrt{p(x)})^2 dx
\end{equation}
Squared Hellinger distance is related to the total variation distance, an important symmetric f-divergence, via the following inequalities, $H^2(P,Q)\leq D_{TV}(P,Q)\leq\sqrt{2}H(P,Q)$,
where $D_{TV}$ is defined as,
\begin{equation}\label{tvd}
D_{TV}(P,Q) = \frac{1}{2}\int_x |p(x)-q(x)|dx
\end{equation}
Squared Hellinger distance is also closely related to Kullback-Leibler divergence (\citet{KL}) and can be bounded by
$2H^2(P,Q)\leq D_{KL}(P||Q)$, where $D_{KL}$ is defined as,
\begin{equation}\label{kld}
D_{KL}(P||Q) = \int_x p(x)\log\frac{p(x)}{q(x)}dx
\end{equation}

The squared Hellinger distance \eqref{h2} belongs to a one-parameter family of f-divergences called the $\alpha$-divergences (\citet{CA10}), which smoothly connects well-known divergences such as the KL divergence, reverse KL divergence, and Pearson $\chi^2$ divergence by varying the $\alpha$ parameter \cite{BP18}:
\begin{equation}\label{alphaD}
D_A^\alpha(P||Q) = \frac{1}{\alpha}+\frac{1}{1-\alpha} - \frac{1}{\alpha(1-\alpha)}\int_x (\frac{q(x)}{p(x)})^{1-\alpha} p(x)dx
\end{equation}
Among its other properties, the squared Hellinger distance (up to a constant scaling factor) has been shown to be the minimum distance within a symmetrized $\alpha$-divergence family \cite{M22}.

For a given historic path of length $T$, the nominal probability vector is $\bar{p}=(\frac{1}{T},\ldots,\frac{1}{T})$ with length $T$. Hence the squared Hellinger distance for any perturbation measure $Q$ is bounded between $[0,1-\sqrt{\frac{1}{T}}]$. The Hellinger-Beta is proposed as follows (for a single path):

\begin{equation}
\label{hbeta}
\beta_{H^2,\delta}^i = \frac{\sum_{t=1}^{T}q_{t}^\star l_{t}^i}{\rho_{H^2,\delta}(l^M)}\;,
\end{equation}

\begin{itemize}
\item 
$\rho_{H^2,\delta}(l^M) = \max_{\{q_{t}\}_{t=1}^{T}\in\mathcal{Q}_{H^2,\delta}}\sum_{t=1}^T q_{t} l^M_{t}$ is the squared Hellinger distance induced risk for the optimal(market) portfolio, where $\mathcal{Q}_{H^2,\delta} = \{\{q_{t}\}_{t=1}^{T}|\sum_{t=1}^T q_{t}=1,D_{H^2}(q||\bar{p})\leq\delta\}$ is the set of feasible perturbation measures, where $\bar{p}_t = \frac{1}{T},\forall t=1,\ldots,T$ is the empirical probability vector for $T$ observations in a single historic path;
\item
$q^\star_{t}$ = optimal perturbed measure from $\mathcal{Q}_{H^2,\delta}$ which maximizes the perturbed market expected loss as in the dual representation of $\rho_{H^2,\delta}(l^M)$ for a single path; 
\end{itemize}

The optimal perturbation measure of the inner maximization problem of perturbed loss can be efficiently solved using convex optimization procedures, from which the Hellinger-Beta calculation follows easily. The squared Hellinger distance possesses desirable properties such as symmetry, boundedness between 0 and 1, and provides lower and upper bounds for total variation distance while varying smoothly as the probability vectors change. The boundedness of squared Hellinger distance gives a clear range of choice for risk aversion parameter $\delta$ when designing Beta metrics based on its induced risk measure.

\begin{figure}[ht!]
    \centering
    \subfloat[\centering $CVaR_\alpha$ with tail risk level $1-\alpha$]{{\includegraphics[width=7.5cm]{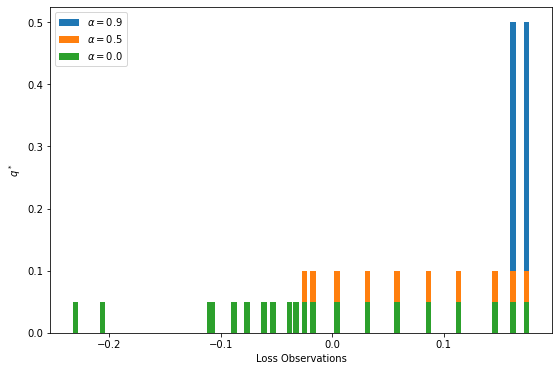} }}%
    \quad
    \subfloat[\centering $\rho_{H^2,\delta}$ with divergence radius $\delta$]{{\includegraphics[width=7.5cm]{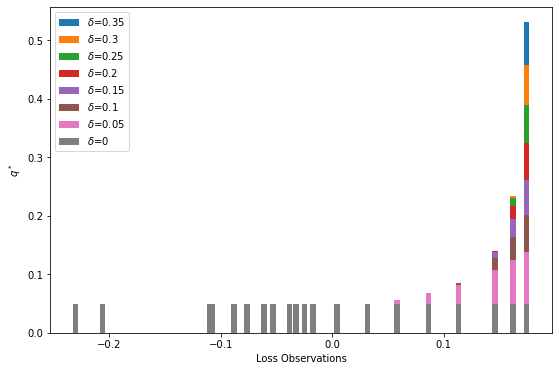} }}%
    \caption{Comparison of $\rho_{H^2,\delta}$ and $CVaR_\alpha$ Optimal Risk Identifiers}%
    \label{fig:f_q_demo}%
\end{figure}

As a comparison of the different behavior of f-divergence (and in particular squared Hellinger distance) induced risk measures and CVaR risks, Figure \ref{fig:f_q_demo} shows the risk identifiers $q^\star$ of $\rho_{H^2,\delta}$ for 20 loss observations sampled from a Gaussian distribution, where $\delta$ is varying, compared against $CVaR_\alpha$ risk identifiers which are scaled indicator functions (step functions) on the $(1-\alpha)\times 100\%$ tail for various $\alpha$. This shows the different risk behaviors offered by the family of f-divergence induced risk measures, which generally have smoother transitions of the risk identifiers than the tail indicator functions for CVaR. In particular, large $\delta$ corresponds to worst-case loss, which is similar to large $\alpha\to 1$, while $\delta=0$ and $\alpha=0$ both reduce to the risk-neutral expectation. The behavior for intermediate values of $\delta>0$ is not directly comparable with intermediate values of $\alpha\in(0,1)$. Notice that for an empirical distribution of loss observations and for any fixed $\delta>0$, the optimal risk identifiers for $\rho_{H^2,\delta}$ are always larger for larger (more severe) loss observations. This result can be formalized as follows, where $\bar{p}_t = \frac{1}{T},\forall t=1,\ldots,T$ is the probability vector associated with the empirical loss distribution $l = \{l_t\}_{t=1}^T$ based on $T$ loss observations. 

\begin{lemma}\label{increasing_q}
Consider the dual maximization problem for the f-divergence induced risk measures of an empirical loss distribution $l = \{l_t\}_{t=1}^T$ with a given divergence generating function $\phi$ and risk aversion radius $\delta>0$:
\begin{equation}\label{dual_opt_empirical}
\rho_{\phi,\delta}(l) = \max_{\{q_{t}\}_{t=1}^{T}\in\mathcal{Q}_{\phi,\delta}}\sum_{t=1}^T q_{t} l_{t}    
\end{equation} 
where $\mathcal{Q}_{\phi,\delta} = \{\{q_{t}\}_{t=1}^{T}|\sum_{t=1}^T q_{t}=1,D_{\phi}(q||\bar{p})\leq\delta\}$ and $D_\phi(q||\bar{p}) = \sum_{t=1}^T \phi(\frac{q_t}{\bar{p}_t})\bar{p}_t = \frac{1}{T}\sum_{t=1}^T \phi(T q_t)$. Let $\{q^\star_t\}_{t=1}^T$ be an optimal solution to \eqref{dual_opt_empirical}. Then for any $\forall i,j\in\{1,\ldots,T\}$, $i\neq j$ such that the loss observations $l_i>l_j$, we must have $q^\star_i\geq q^\star_j$ in the optimal solution.

\end{lemma}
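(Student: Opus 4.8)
The plan is to argue by contradiction using a coordinate-swap (exchange) argument that exploits the permutation symmetry of the divergence constraint under the uniform reference measure $\bar{p}$. Suppose, contrary to the claim, that there exist indices $i\neq j$ with $l_i>l_j$ but $q_i^\star<q_j^\star$. I would introduce a competitor vector $\tilde{q}$ obtained from $q^\star$ by interchanging the masses at $i$ and $j$, namely $\tilde{q}_i=q_j^\star$, $\tilde{q}_j=q_i^\star$, and $\tilde{q}_k=q_k^\star$ for all $k\notin\{i,j\}$, and then show that $\tilde q$ is feasible yet strictly improves the objective, contradicting the optimality of $q^\star$.

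The first step is to verify that $\tilde{q}\in\mathcal{Q}_{\phi,\delta}$. The normalization $\sum_t\tilde{q}_t=1$ is immediate, since interchanging two entries leaves the sum unchanged. The crucial point is the divergence constraint: because $\bar{p}$ is uniform, the divergence takes the separable, permutation-symmetric form $D_\phi(q\|\bar{p})=\frac{1}{T}\sum_{t=1}^T\phi(Tq_t)$, a sum of identical functions of the individual coordinates. Interchanging two coordinates therefore leaves this sum exactly unchanged, so $D_\phi(\tilde{q}\|\bar{p})=D_\phi(q^\star\|\bar{p})\leq\delta$ and $\tilde q$ is feasible.

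The second step is a direct comparison of objective values:
\[
\sum_{t=1}^T \tilde{q}_t l_t-\sum_{t=1}^T q_t^\star l_t=(q_j^\star l_i+q_i^\star l_j)-(q_i^\star l_i+q_j^\star l_j)=(q_j^\star-q_i^\star)(l_i-l_j).
\]
Under the contradiction hypothesis both factors are strictly positive, so $\tilde q$ attains a strictly larger value than $q^\star$ in \eqref{dual_opt_empirical}, which is impossible. Hence no such pair exists and $l_i>l_j$ forces $q_i^\star\geq q_j^\star$.

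I expect this argument to go through with no genuine obstacle; the only point requiring care is the feasibility check, whose validity rests entirely on $\bar{p}$ being the uniform measure, so that the divergence constraint is invariant under permutations of the coordinates. Were the reference measure non-uniform, the swap would perturb the divergence and a more delicate argument would be needed, e.g. via the KKT stationarity condition $l_t=\nu+\lambda\phi'(Tq_t^\star)$ combined with the monotonicity of $\phi'$. A secondary merit of the exchange argument is that it requires neither differentiability nor strict convexity of $\phi$, nor any sign condition on a multiplier, so it applies uniformly across all the divergence generators listed in Tables~\ref{f_divergences}--\ref{f_conjugates}.
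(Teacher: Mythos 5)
Your proof is correct and follows essentially the same route as the paper's: a contradiction via a coordinate swap, with feasibility guaranteed by the permutation symmetry of $D_\phi(\cdot\|\bar{p})$ under the uniform reference measure, and the strict objective improvement $(q_j^\star-q_i^\star)(l_i-l_j)>0$. Your added remarks on the role of uniformity and on applicability without smoothness of $\phi$ are sound but go beyond what the paper records.
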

\begin{proof}
Suppose $\{q^\star_t\}_{t=1}^T$ is an optimal solution for \eqref{dual_opt_empirical}, where there exists $i,j\in\{1,\ldots,T\},i\neq j$, such that $l_i>l_j$ and $q^\star_i<q^\star_j$. Consider the alternative solution $\{q'_t\}_{t=1}^T$ such that $q'_t = q^\star_t,\forall t\in\{1,\ldots,T\}$ such that $t\neq i$ and $t\neq j$, $q'_i = q^\star_j$, and $q'_j = q^\star_i$. $\{q'_t\}_{t=1}^T$ is a feasible solution to \eqref{dual_opt_empirical} because $D_\phi(q'||\bar{p}) =\frac{1}{T}\sum_{t=1}^T \phi(T q'_t) =\frac{1}{T}\sum_{t=1}^T \phi(T q^\star_t)\leq \delta$ by assumption. However, this solution $\{q'_t\}_{t=1}^T$ strictly increases the objective because, 
$$\sum_{t=1}^T q'_t l_t = \sum_{t=1}^T q^\star_t l_t+ (q'_i- q^\star_i) l_i+ (q'_j - q^\star_j) l_j = \sum_{t=1}^T q^\star_t l_t + (q^\star_j - q^\star_i)(l_i-l_j) $$
where by our assumptions $l_i>l_j$ and $q^\star_i<q^\star_j$, hence $\sum_{t=1}^T q'_t l_t > \sum_{t=1}^T q^\star_t l_t$. This raises a contradiction since $\{q^\star_t\}_{t=1}^T$ is assumed to be an optimal solution. This completes the proof.
\end{proof}
Lemma \ref{increasing_q} formalizes the behavior of the risk identifiers for f-divergence induced risk measures, as demonstrated in Figure \ref{fig:f_q_demo} for the case of squared Hellinger distance in particular. Hence we expect such risk measures to put a heavier focus on more extreme losses, which is natural for a risk measure. For empirical loss observations (which are bounded), the optimal objective value of \eqref{dual_opt_empirical} is always greater than the empirical mean but no larger than the most extreme loss: $\frac{1}{T}\sum_{t=1}^T l_t\leq\rho_{\phi,\delta}(l)\leq \max_{t=1,\ldots,T} l_t$.

As a last remark, the proposed f-Betas are also related to a more general theory about deviation-based Betas, see \citet{14}. The particular Beta in the deviation-based framework would be (for a single path):
\begin{equation}
\label{hbeta_dev}
\beta_{H^2,\delta}^{i,dev} = \frac{\sum_{t=1}^{T}q_{t}^\star d_{t}^i}{\rho_{H^2,\delta}(d^M)}\;,
\end{equation}
where the deviation measures are defined as $d^i_t = \mathbb{E}[r^i]-r_t^i$ and $d^M = \mathbb{E}[r^M] - r^M$ and the risk identifiers $q_t^\star$ are defined similarly as before.
\section{Numerical Experiments}

In the following numerical experiments, we computed Beta values using the S\&P 500 index as the chosen optimal (market) portfolio. For demonstration, only the resulting values for the DOW 30 stocks are presented. To evaluate the stability over time of the different Betas, we first considered
two non-overlapping 7.5-year historic periods:
\begin{itemize}
    \item Period 1 (containing Financial Crisis 2008): 
from 2006/01/01 to 2013/06/28; \item  Period 2 (containing COVID-19 Crisis):  from 2013/07/01 to 2021/01/01. 
\end{itemize}

\begin{table}[!ht]
    \centering
      \caption{Betas for DOW 30 Stocks: Non-overlapping Period 1 and 2 }
      \scalebox{0.85}{
       \begin{tabular}{|| c | c | c | c | c | c | c | c | c | c | c||} 
 \hline
 & $\rho^{dev}_{H^2,0.2}$ & $\rho^{dev}_{H^2,0.2}$ & $\rho_{H^2,0.2}$ & $\rho_{H^2,0.2}$ & $ERoD_{0+}$ & $ERoD_{0+}$ & $CDaR_{0.5}$
 & $CDaR_{0.5}$ & Standard & Standard\\  

  &  Period 1 & Period 2 &  Period 1 & Period 2 & Period 1  & Period 2 & Period 1
 & Period 2 & Period 1 & Period 2 \\
 \hline
  AAPL & 0.861 & 0.868 & 0.835 & 0.828 & -0.986 & 1.126  & -0.796 & 1.207 & 0.894 & 1.064\\ 
 \hline
AMGN & 0.650 & 0.663 & 0.647 & 0.648 & -0.218 & 0.468  & -0.163 & 0.523 & 0.817 & 0.954 \\
 \hline
 AXP & 1.648 & 1.260 & 1.641 & 1.266& 0.540 & 1.606  & 0.659 & 1.547  & 1.601 & 1.056\\
 \hline
 BA & 0.702 & 2.613 & 0.693 & 2.637 & 1.332 & 1.277 & 1.376 & 1.342 & 0.866 & 1.365 \\
 \hline
 CAT & 1.148 & 1.329 &1.141 & 1.328 & 0.558 & 1.892 & 0.626 & 1.868 & 1.252 & 1.406 \\
 \hline
 CRM & 1.224 & 0.964 & 1.198 & 0.935 & -1.438 & 0.154 & -1.268 & 0.207 & 1.533 & 1.170\\
 \hline
 CSCO & 1.091 & 1.234 & 1.086 & 1.236 & 1.119 & 0.664 & 1.015 & 0.641 & 1.116 & 1.245 \\
 \hline
 CVX & 1.225 & 1.442 & 1.214 & 1.467 & -0.129 & 1.323 & -0.038 & 1.303 & 1.042 & 1.259 \\
 \hline
 DIS & 1.038 & 1.270 & 1.025 & 1.267 & 0.081 & 0.859 & 0.197 & 0.893 & 0.951 & 1.025 \\
 \hline
 GS & 1.304 & 1.518 & 1.298 & 1.529 & 0.640 & 1.956 & 0.564 & 1.896 & 1.445 & 1.325 \\
 \hline
 HD & 0.755 & 1.026 & 0.743 & 1.012 & -0.037 & 0.253 & 0.040 & 0.373 & 1.003 & 0.901 \\
 \hline
 HON & 0.957 & 1.173 & 0.944 & 1.168 & 0.768 & 0.967 & 0.830 & 1.015 & 1.026 & 1.105 \\
 \hline
 IBM & 0.617 & 1.424 & 0.604 & 1.453 & -0.410 & 1.986 & -0.337 & 1.901 & 0.721 & 1.002 \\
 \hline
 INTC & 0.860 & 1.095 & 0.857 & 1.090 & 0.317 & 0.405 & 0.323 & 0.386 & 0.776 & 1.205\\
 \hline
 JNJ & 0.572 & 0.777 & 0.566 & 0.773 & -0.110 & 0.519 & -0.071 & 0.565 & 0.491 & 0.728 \\
 \hline
 JPM & 1.273 & 1.401 & 1.261 & 1.400 & -0.910 & 1.190 & -0.893 & 1.257 & 1.436 & 1.243 \\
 \hline
 KO & 0.237 & 1.016 & 0.224 & 1.023 & -0.227 & 0.356 & -0.108 & 0.400 & 0.411 & 0.657 \\
 \hline
 MCD & 0.686 & 0.745 & 0.669 & 0.736 & -0.903 & -0.478 & -0.787 & -0.366 & 0.674 & 0.795 \\
 \hline
 MMM & 0.778 & 1.098 & 0.772 & 1.103 & 0.585 & 1.145 & 0.588 & 1.154 & 0.881 & 1.016 \\
 \hline
 MRK & 0.850 & 0.877 & 0.842 & 0.874 & 0.727 & 0.359 & 0.854 & 0.394 & 0.705 & 0.725 \\
 \hline
 MSFT & 0.801 & 0.954 & 0.796 & 0.923 & -0.047 & -0.197 & -0.014 & -0.090 & 0.877 & 1.177 \\
 \hline
 NKE & 1.060 & 1.110 & 1.044 & 1.092 & -0.788 & -0.378 & -0.660 & -0.269 & 1.023 & 0.941 \\
 \hline
 PG & 0.642 & 0.467 & 0.637 & 0.456 & 0.152 & 0.107 & 0.201 & 0.159 & 0.576 & 0.634 \\
 \hline
 TRV & 1.363 & 1.397 & 1.355 & 1.407 & -0.511 & 0.551 & -0.415 & 0.593 & 0.947 & 0.926 \\
 \hline
 UNH & 0.880 & 1.203 & 0.877 & 1.182 & 0.685 & 0.180 & 0.842 & 0.245 & 0.683 & 0.925 \\
 \hline
 VZ & 0.757 & 0.467 & 0.745 & 0.464 & 0.414 & 0.285 & 0.542 & 0.289 & 0.741 & 0.625 \\
 \hline
 WBA & 0.666 & 1.208 & 0.664 & 1.225 & 0.223 & 0.856 & 0.320 & 0.848 & 0.644 & 0.820 \\
 \hline
 WMT & 0.688 & 0.198 & 0.681 & 0.180 & -0.916 & 1.007 & -0.884 & 0.976 & 0.739 & 0.550 \\
 \hline
\end{tabular}}
\label{tab1}
\end{table}

The results are shown in Table \ref{tab1} for $\rho_{H^2,0.2}$-Beta along with three other Beta metrics: $ERoD_{0+}$ Beta, $CDaR_{0.5}$ Beta, and the Standard Beta. We observe that the new Hellinger-Beta with confidence radius $\delta=0.2$ gives relatively similar results to Standard Beta since they are both derived based on returns, while the behavior is quite different from the two Drawdown Betas in both periods. The Hellinger-Beta differs from Standard Beta in values due to the different perspectives of risk they focus on. Specifically, the values reported for Hellinger-Betas in Table \ref{tab1} can be indicate more or less correlations with the market for different stocks and different periods. We also included the f-Betas calculated using deviations as a comparison. In addition, we report the Pearson correlation and Spearman correlation between the two columns of $\rho^{dev}_{H^2,0.2}$ for Period 1 and Period 2 below:
\begin{center}
Pearson Correlation: 0.318

Spearman Correlation: 0.545
\end{center}
The relatively strong correlations suggest that the Hellinger-Beta metrics are stable across different time periods.

The second set of tests we performed is to understand the sensitivity of the calculated Beta metric as the risk aversion parameter $\delta$ changes from small to relatively medium value, where the deviation-based Beta is used. Here we take the data in a single year as a test range and calculated $\rho^{dev}_{H^2,\delta}$ for different values of $\delta = 0.05,0.1,0.15,0.2,0.25,0.3,0.35$. Table \ref{tab2} summarizes the results of this experiment for the period $2006/01/03 - 2006/12/29$. Notice that for this period, most of the Hellinger-Beta values have a monotonic relationship as the risk aversion parameter is increased from 0.05 to 0.35. The monotonicity relationship can be positive or negative depending on the specific direction of movement of the particular asset with the market. The direction of change of the Beta values, as risk aversion parameter $\delta$ increases, can be seen as a measure of correlation between the asset and the market under increasingly more stressful scenarios. Assets with negative drift in the Beta values as stress level increases can be seen as good tools for hedging against large market downside risks, examples include CRM, GS, MCD, and MRK.

\begin{table}[!ht]

    \centering
      \caption{Different Risk Level Hellinger-Betas for DOW 30 Stocks in 2006}
      \scalebox{0.9}{
       \begin{tabular}{|| c | c | c | c | c | c | c | c ||} 
 \hline
  & $\rho^{dev}_{H^2,0.05}$ & $\rho^{dev}_{H^2,0.1}$ & $\rho^{dev}_{H^2,0.15}$ & $\rho^{dev}_{H^2,0.2}$ & $\rho^{dev}_{H^2,0.25}$ & $\rho^{dev}_{H^2,0.3}$ & $\rho^{dev}_{H^2,0.35}$ \\  
\hline
AAPL & 1.5613 & 1.5547 & 1.5679 & 1.5916 & 1.6214 & 1.6547 & 1.6902 \\
\hline
AMGN & 0.9676 & 1.1055 & 1.2079 & 1.2888 & 1.3562 & 1.4144 & 1.4661 \\
\hline
AXP & 1.0704 & 1.0936 & 1.113 & 1.1295 & 1.1438 & 1.1564 & 1.1677 \\
\hline
BA & 1.2913 & 1.3134 & 1.3231 & 1.3276 & 1.3298 & 1.331 & 1.3319 \\
\hline
CAT & 1.6078 & 1.6591 & 1.6947 & 1.7177 & 1.731 & 1.7364 & 1.735 \\
\hline
CRM & 1.6941 & 1.5299 & 1.4384 & 1.3889 & 1.3663 & 1.3624 & 1.3723 \\
\hline
CSCO & 1.3424 & 1.3565 & 1.3716 & 1.3874 & 1.4036 & 1.4196 & 1.4352 \\
\hline
CVX & 0.906 & 0.9221 & 0.9186 & 0.9039 & 0.8822 & 0.8558 & 0.8259 \\
\hline
DIS & 0.8469 & 0.8231 & 0.8067 & 0.7987 & 0.7983 & 0.8043 & 0.8163 \\
\hline
GS & 1.635 & 1.5502 & 1.4762 & 1.4127 & 1.3572 & 1.3076 & 1.2625 \\
\hline
HD & 0.9981 & 1.0113 & 1.0364 & 1.0641 & 1.0924 & 1.121 & 1.1498 \\
\hline
HON & 1.2996 & 1.3738 & 1.4405 & 1.4999 & 1.5533 & 1.6022 & 1.6475 \\
\hline
IBM & 0.8231 & 0.8104 & 0.8116 & 0.8207 & 0.8347 & 0.8524 & 0.873 \\
\hline
INTC & 1.3659 & 1.3066 & 1.2769 & 1.263 & 1.2588 & 1.2615 & 1.2694 \\
\hline
JNJ & 0.5138 & 0.5874 & 0.6506 & 0.706 & 0.7561 & 0.8025 & 0.8461 \\
\hline
JPM & 1.272 & 1.2805 & 1.289 & 1.2971 & 1.3051 & 1.3133 & 1.3217 \\
\hline
KO & 0.7207 & 0.7607 & 0.793 & 0.8217 & 0.8484 & 0.8741 & 0.8993 \\
\hline
MCD & 0.6063 & 0.4171 & 0.2655 & 0.1385 & 0.0272 & -0.0736 & -0.1674 \\
\hline
MMM & 0.9147 & 0.9656 & 1.0053 & 1.0366 & 1.0616 & 1.0818 & 1.0983 \\
\hline
MRK & 0.8299 & 0.8114 & 0.7754 & 0.7317 & 0.6849 & 0.6373 & 0.5896 \\
\hline
MSFT & 0.8842 & 0.9199 & 0.9448 & 0.9653 & 0.9842 & 1.0028 & 1.0216 \\
\hline
NKE & 0.5297 & 0.5469 & 0.5557 & 0.5583 & 0.5562 & 0.5504 & 0.5414 \\
\hline
PG & 0.6709 & 0.6892 & 0.6968 & 0.6993 & 0.699 & 0.6969 & 0.6935 \\
\hline
TRV & 0.9607 & 0.983 & 1.0117 & 1.0386 & 1.0623 & 1.0823 & 1.0989 \\
\hline
UNH & 0.739 & 0.8158 & 0.8942 & 0.97 & 1.0424 & 1.1116 & 1.178 \\
\hline
VZ & 0.9604 & 0.9674 & 0.9648 & 0.9573 & 0.9477 & 0.9374 & 0.9272 \\
\hline
WBA & 0.9219 & 0.9876 & 1.0417 & 1.0892 & 1.1326 & 1.1732 & 1.2118 \\
\hline
WMT & 0.9784 & 0.9909 & 0.988 & 0.9799 & 0.9704 & 0.9611 & 0.9527 \\
\hline
\end{tabular}}
\label{tab2}
\end{table}
\begin{figure}[!ht]
    \centering
    \includegraphics[scale=0.5]{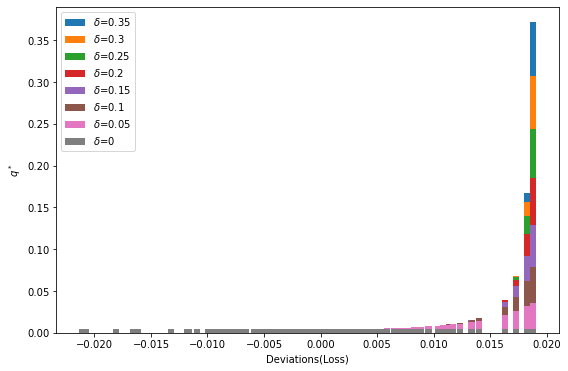}
    \caption{$\rho^{dev}_{H^2,\delta}$ Optimal Perturbed Market Probabilities for Varying $\delta$ in 2006}
    \label{fig:f_beta_q}
\end{figure}

We also observe that for this one-year period, the risk aversion parameter $\delta = 0.35$ is actually large enough in the sense that the optimal perturbation measure allocates more than 50\% of the probabilities onto the top two deviation loss scenarios, which is smaller than 1\% of the total number of dates. The perturbation allocation for different risk aversion parameters in the deviation-based Hellinger Beta calculation can be visualized in Figure \ref{fig:f_beta_q} based on the loss observations (x-axis) in the S\&P 500 market index from the year 2006. The y-axis plots the optimal risk identifiers in bars for each loss observation and for different risk aversion parameters $\delta$, shown in different colors. We can see that increasing levels of $\delta$ will perturb the probability distribution of $q^\star$ towards more extreme deviation (loss) scenarios. Hence we conclude that, in general, a risk aversion level of $\delta = 0.3$ is large enough for enough stress to be put on extreme worst cases of the deviation (loss) scenarios. Keep increasing the risk aversion parameter won't mean much as we just keep adding probabilities onto the top few extreme scenarios. Instead, a wider range of behavior is observable by considering smaller values of the risk aversion parameter as the perturbed probabilities shift gradually from the uniform distribution towards a more concentrated one.

\begin{table}[!ht]

    \centering
      \caption{Different Risk Level Hellinger-Betas for DOW 30 Stocks in 2013}
      \scalebox{0.9}{
       \begin{tabular}{|| c | c | c | c | c | c | c | c ||} 
 \hline
  & $\rho^{dev}_{H^2,0.05}$ & $\rho^{dev}_{H^2,0.1}$ & $\rho^{dev}_{H^2,0.15}$ & $\rho^{dev}_{H^2,0.2}$ & $\rho^{dev}_{H^2,0.25}$ & $\rho^{dev}_{H^2,0.3}$ & $\rho^{dev}_{H^2,0.35}$ \\ 
\hline
AAPL & 0.7453 & 0.7348 & 0.7134 & 0.6929 & 0.6747 & 0.6588 & 0.645 \\
\hline
AMGN & 1.0872 & 1.1012 & 1.1158 & 1.1276 & 1.1367 & 1.144 & 1.1498 \\
\hline
AXP & 0.9862 & 0.9287 & 0.8918 & 0.8652 & 0.8445 & 0.8279 & 0.8141 \\
\hline
BA & 1.0219 & 1.044 & 1.0553 & 1.062 & 1.0663 & 1.0693 & 1.0714 \\
\hline
CAT & 0.8569 & 0.8017 & 0.7529 & 0.7114 & 0.6759 & 0.6453 & 0.6187 \\
\hline
CRM & 1.2519 & 1.1803 & 1.1313 & 1.0964 & 1.0702 & 1.0499 & 1.0336 \\
\hline
CSCO & 0.8064 & 0.722 & 0.6593 & 0.6116 & 0.5737 & 0.5429 & 0.5172 \\
\hline
CVX & 0.9013 & 0.9094 & 0.9068 & 0.901 & 0.8943 & 0.8876 & 0.8811 \\
\hline
DIS & 1.2187 & 1.2778 & 1.3191 & 1.3493 & 1.3726 & 1.3912 & 1.4063 \\
\hline
GS & 1.4248 & 1.4093 & 1.4099 & 1.417 & 1.4265 & 1.4368 & 1.4471 \\
\hline
HD & 0.9557 & 0.9862 & 1.0059 & 1.0205 & 1.0322 & 1.0418 & 1.05 \\
\hline
HON & 1.1175 & 1.0991 & 1.0868 & 1.0774 & 1.0698 & 1.0633 & 1.0578 \\
\hline
IBM & 0.7449 & 0.7532 & 0.7664 & 0.7799 & 0.7926 & 0.8042 & 0.8147 \\
\hline
INTC & 0.9239 & 0.9882 & 1.0397 & 1.0811 & 1.115 & 1.1435 & 1.1678 \\
\hline
JNJ & 0.7964 & 0.8415 & 0.8791 & 0.9095 & 0.9346 & 0.9557 & 0.9737 \\
\hline
JPM & 1.1035 & 1.0251 & 0.9752 & 0.9407 & 0.9153 & 0.8956 & 0.88 \\
\hline
KO & 0.9118 & 0.9982 & 1.0549 & 1.0946 & 1.1242 & 1.147 & 1.1653 \\
\hline
MCD & 0.586 & 0.6317 & 0.6667 & 0.6932 & 0.714 & 0.7307 & 0.7444 \\
\hline
MMM & 0.9247 & 0.9425 & 0.9568 & 0.9684 & 0.978 & 0.9861 & 0.9931 \\
\hline
MRK & 0.7537 & 0.8198 & 0.8673 & 0.9033 & 0.9318 & 0.955 & 0.9744 \\
\hline
MSFT & 0.888 & 0.935 & 0.9826 & 1.0253 & 1.0629 & 1.0959 & 1.125 \\
\hline
NKE & 0.7917 & 0.7521 & 0.7303 & 0.7165 & 0.7068 & 0.6997 & 0.6942 \\
\hline
PG & 0.8004 & 0.8593 & 0.9118 & 0.9564 & 0.9945 & 1.0272 & 1.0556 \\
\hline
TRV & 1.0165 & 1.0479 & 1.074 & 1.0955 & 1.1135 & 1.1288 & 1.142 \\
\hline
UNH & 0.9071 & 0.922 & 0.9268 & 0.9281 & 0.9281 & 0.9274 & 0.9264 \\
\hline
VZ & 0.6337 & 0.6496 & 0.6757 & 0.7015 & 0.7248 & 0.7457 & 0.7642 \\
\hline
WBA & 1.0009 & 1.0795 & 1.1458 & 1.202 & 1.2501 & 1.2917 & 1.3281 \\
\hline
WMT & 0.4818 & 0.5169 & 0.5526 & 0.5842 & 0.6116 & 0.6354 & 0.6561 \\
\hline
\end{tabular}}
\label{tab3}
\end{table}
 \begin{figure}[ht!]
\centering
\begin{subfigure}{.5\textwidth}
  \centering
  \includegraphics[width=1\textwidth]{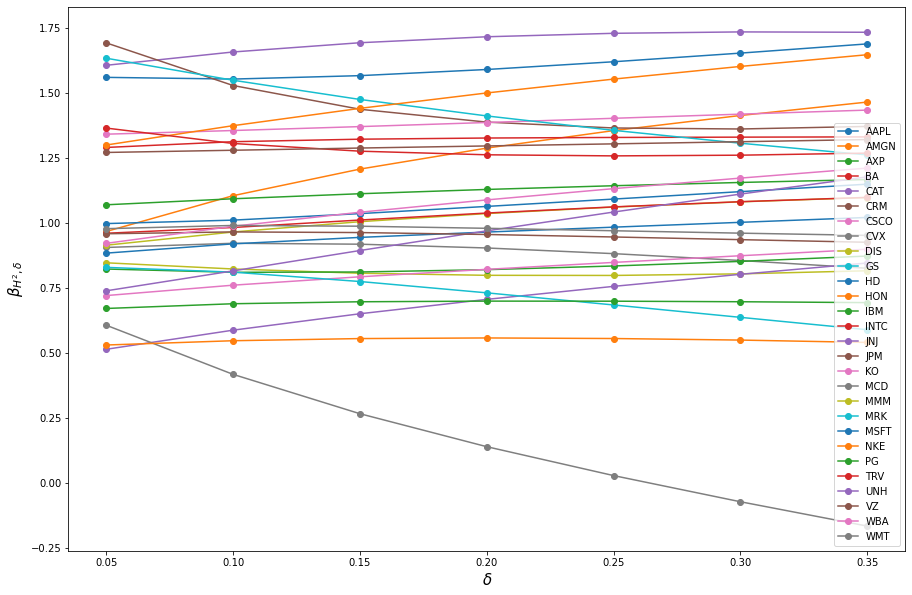}
    \caption{2006}
\end{subfigure}%
\begin{subfigure}{.5\textwidth}
  \centering
  \includegraphics[width=1\textwidth]{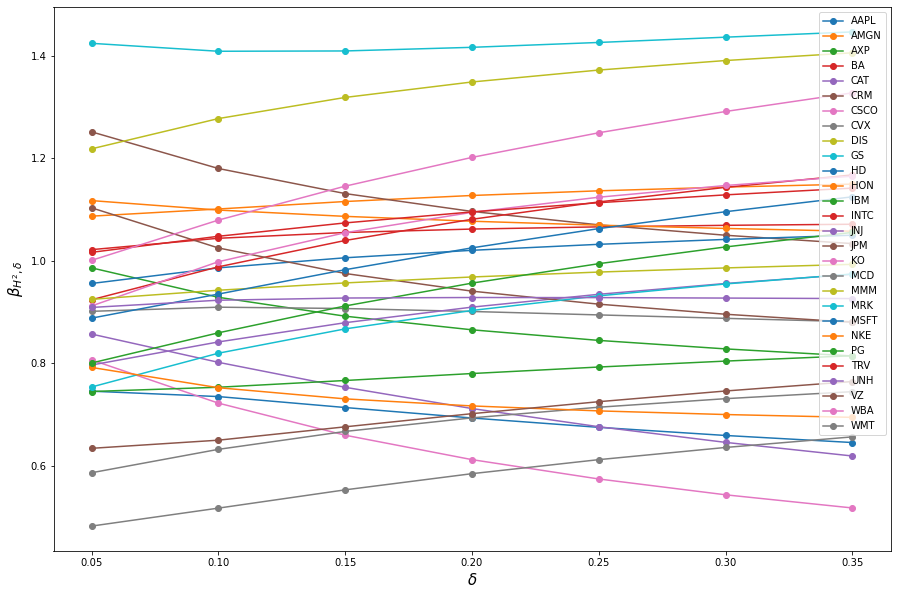}
    \caption{2013}
\end{subfigure}%
\caption{$\beta_{H^2,\delta}$ Drift against $\delta$ for DOW 30 Stocks}
\label{fig:HBeta_move}
\end{figure}

The same experiment as in Table \ref{tab2} is carried out for the period $2013/01/03 - 2013/12/31$. We arrive at very similar observations as the movement of Hellinger-Beta values (based on deviations) are mostly monotonic as the risk aversion parameter is increased. The results are summarized in Table \ref{tab3}.

We also make the observation that while most reported Hellinger-Beta values are monotonic with respect to changing risk aversion parameters, the direction of movement can be different in 2013 from what is reported in 2006. Hence this indicates that meaningful hedging can be obtained by looking at the direction of Hellinger-Beta drifts, which adapts to stress scenarios under different market conditions in different time periods. Figure \ref{fig:HBeta_move}(a) (and \ref{fig:HBeta_move}(b)) demonstrates the general monotonic drift behavior of (deviation-based) Hellinger-Beta values for changing $\delta$, reported respectively in Table \ref{tab2} (and Table \ref{tab3}) for each stock in the year 2006 (and the year 2013). The risk aversion parameter $\delta$ is plotted on the x-axis in each panel, and the Hellinger-Beta values for each DOW 30 stock are plotted on the y-axis.

\newpage
\section{Extension to f-Drawdown Betas and Numerical Results}

Similar to \citet{DU22}, we can extend the f-Betas for a single path to a setting where the drawdown random variables are considered and the f-Beta would translate to f-Drawdown Beta defined as follows:
\begin{equation}
\label{fbeta_drawdown}
\beta_{\phi,\delta}^{i,DD} = \frac{\sum_{t=1}^{T}q_{t}^\star dd_{t}^i}{\rho_{\phi,\delta}(dd^M)}\;,
\end{equation}
where the drawdown variables for the market index are defined as, 
\begin{equation}\label{dd_market}
dd_t^M = \max_{1\leq\tau\leq t}w^M_\tau - w^M_t
\end{equation}
for each time step $t$, with uncompounded cumulative returns of the market at time $t$ defined as,
$$w^M_t = \sum_{\tau=1}^t r^M_\tau$$  Let $\tau(t)$ denote the historic peak location in the previous definition for drawdowns, that is to say, $w^M_{\tau(t)} = \max_{1\leq\tau\leq t}w^M_\tau$. Then the change in cumulative returns in the market drawdown period defined at any given time $t$ for an individual asset $i$ is defined as,
\begin{equation}\label{dd_asset}
dd_t^i = -\sum_{\tau(t)\leq\tau\leq t}r_\tau^i
\end{equation}

The rest of the terms in \eqref{fbeta_drawdown} are defined similarly to those from previous f-Betas such as in \eqref{fbeta_3}. The proof for optimality conditions of drawdown type Betas is provided in \citet{DU22}, \citet{13}, and the optimality condition of f-divergence induced risk measures on drawdown observations can be obtained by considering the prior results in conjunction with general theories about deviation based Betas, see \citet{14}. While previous works on CDaR-Beta (\citet{13}) and ERoD-Beta (\citet{DU22}) are based on risk measures (applied to drawdown deviations) such as CVaR, our new approach uses the f-divergence induced risk measure $\rho_{\phi,\delta}$ which provides nicer properties and an intuitive interpretation from the point of view of distributional robustness.

Similar to Section 6, we performed several numerical experiments on f-Drawdown Betas, using the S\&P 500 index as the optimal market portfolio. Only the results for DOW 30 stocks are presented for demonstration. In particular, the generating f-divergence is chosen to be the squared Hellinger distance, which gives the Hellinger-Drawdown Beta:

\begin{equation}
\label{hbeta_drawdown}
\beta_{H^2,\delta}^{i,DD} = \frac{\sum_{t=1}^{T}q_{t}^\star dd_{t}^i}{\rho_{H^2,\delta}(dd^M)}\;,
\end{equation}

In the first experiment, consider the same two non-overlapping periods, Period 1 and Period 2, from Section 6. We reproduced the part of Table \ref{tab1} corresponding to Drawdown Betas below in Table \ref{tab4}, but now introducing the column labeled $\rho^{DD}_{H^2,0.2}$ that stands for Hellinger-Drawdown Beta calculated with $\delta=0.2$. The results are shown in Table \ref{tab4} for $\rho^{DD}_{H^2,0.2}$-Beta along with two other Drawdown Beta metrics: $ERoD_{0+}$ Beta and $CDaR_{0.5}$ Beta. It can be seen that for some assets, the Hellinger risk measure offers different behaviors than CVaR-type risk measures on drawdown variables, but they also share some similarities on various assets. In addition, we report the Pearson correlation and Spearman correlation between the two columns of $\rho^{DD}_{H^2,0.2}$ for Period 1 and Period 2 below:
\begin{center}
Pearson Correlation: 0.469

Spearman Correlation: 0.284

\end{center}

\begin{table}[!ht]

    \centering
      \caption{Drawdown Betas for DOW 30 Stocks: Non-overlapping Period 1 and 2 }
      \scalebox{0.9}{
       \begin{tabular}{|| c | c | c | c | c | c | c ||} 
 \hline
 & $\rho^{DD}_{H^2,0.2}$ & $\rho^{DD}_{H^2,0.2}$ & $ERoD_{0+}$ & $ERoD_{0+}$ & $CDaR_{0.5}$
 & $CDaR_{0.5}$ \\  

  &  Period 1 & Period 2 &  Period 1 & Period 2 & Period 1  & Period 2 \\
 \hline
  AAPL & 0.316 & 0.834 & -0.986 & 1.126  & -0.796 & 1.207 \\ 
 \hline
AMGN & -0.022 & 0.364 & -0.218 & 0.468  & -0.163 & 0.523 \\
 \hline
 AXP & 1.574 & 1.444 & 0.540 & 1.606  & 0.659 & 1.547 \\
 \hline
 BA & 1.429 & 2.797 & 1.332 & 1.277 & 1.376 & 1.342  \\
 \hline
 CAT & 1.243 & 0.975 & 0.558 & 1.892 & 0.626 & 1.868 \\
 \hline
 CRM & 0.132 & 0.714 & -1.438 & 0.154 & -1.268 & 0.207 \\
 \hline
 CSCO & 0.984 & 0.596 & 1.119 & 0.664 & 1.015 & 0.641  \\
 \hline
 CVX & 0.210 & 1.417 & -0.129 & 1.323 & -0.038 & 1.303 \\
 \hline
 DIS & 0.673 & 1.142 & 0.081 & 0.859 & 0.197 & 0.893  \\
 \hline
 GS & 0.867 & 1.315 & 0.640 & 1.956 & 0.564 & 1.896  \\
 \hline
 HD & 0.384 & 0.957 & -0.037 & 0.253 & 0.040 & 0.373  \\
 \hline
 HON & 0.993 & 1.100 & 0.768 & 0.967 & 0.830 & 1.015  \\
 \hline
 IBM & 0.146 & 1.212 & -0.410 & 1.986 & -0.337 & 1.901 \\
 \hline
 INTC & 0.639 & 0.763 & 0.317 & 0.405 & 0.323 & 0.386 \\
 \hline
 JNJ & 0.189 & 0.481 & -0.110 & 0.519 & -0.071 & 0.565 \\
 \hline
 JPM & 0.125 & 1.164 & -0.910 & 1.190 & -0.893 & 1.257 \\
 \hline
 KO & 0.265 & 0.962 & -0.227 & 0.356 & -0.108 & 0.400  \\
 \hline
 MCD & -0.240 & 0.706 & -0.903 & -0.478 & -0.787 & -0.366 \\
 \hline
 MMM & 0.829 & 0.616 & 0.585 & 1.145 & 0.588 & 1.154 \\
 \hline
 MRK & 0.877 & 0.302 & 0.727 & 0.359 & 0.854 & 0.394  \\
 \hline
 MSFT & 0.487 & 0.585 & -0.047 & -0.197 & -0.014 & -0.090  \\
 \hline
 NKE & 0.096 & 0.858 & -0.788 & -0.378 & -0.660 & -0.269  \\
 \hline
 PG & 0.378 & 0.384 & 0.152 & 0.107 & 0.201 & 0.159 \\
 \hline
 TRV & 0.016 & 0.858 & -0.511 & 0.551 & -0.415 & 0.593 \\
 \hline
 UNH & 0.956 & 0.723 & 0.685 & 0.180 & 0.842 & 0.245  \\
 \hline
 VZ & 0.451 & 0.260 & 0.414 & 0.285 & 0.542 & 0.289  \\
 \hline
 WBA & 0.526 & 0.284 & 0.223 & 0.856 & 0.320 & 0.848 \\
 \hline
 WMT & -0.409 & 0.106 & -0.916 & 1.007 & -0.884 & 0.976 \\
 \hline
\end{tabular}}
\label{tab4}
\end{table}

Similar to Table \ref{tab3}, we performed the risk level sensitivity test for Hellinger-Drawdown Beta in the year 2013 and the year 2022, which have quite different market drawdown behaviors in the S\&P 500 index. Specifically, the year 2022 suffers from serious global financial instability as global markets were also impacted by fears of economic recession, resulting in large drawdowns in the S\&P 500 index throughout the year. Similar to the observations we made with (deviation-based) Hellinger-Betas in Section 6, the Hellinger-Drawdown Betas for increasing risk aversion level tend to produce a monotonic shift relationship, which is indicative of the direction of correlation of the asset and the market in more and more extreme drawdown periods. Figure \ref{fig:HDBeta_move} demonstrates the general monotonic drift behavior of Hellinger-Drawdown Betas calculated with changing $\delta$ values for each of the DOW 30 stocks in the year 2013, as reported in Table \ref{tab5}, and the year 2022, as reported in Table \ref{tab6}. The risk aversion parameter $\delta$ is plotted on the x-axis in each panel, and the Hellinger-Beta values for each DOW 30 stock are plotted on the y-axis. The behavior of the values and drifts of Hellinger-Drawdown Betas is different in these two years, and this should be expected given the different market conditions in these two years. For instance, for the risk aversion parameters considered, the changes in Hellinger-Drawdown Betas in 2022 are less significant due to the market being in serious drawdown periods throughout the year.

\begin{figure}[ht!]
\centering
\begin{subfigure}{.5\textwidth}
  \centering
  \includegraphics[width=1\textwidth]{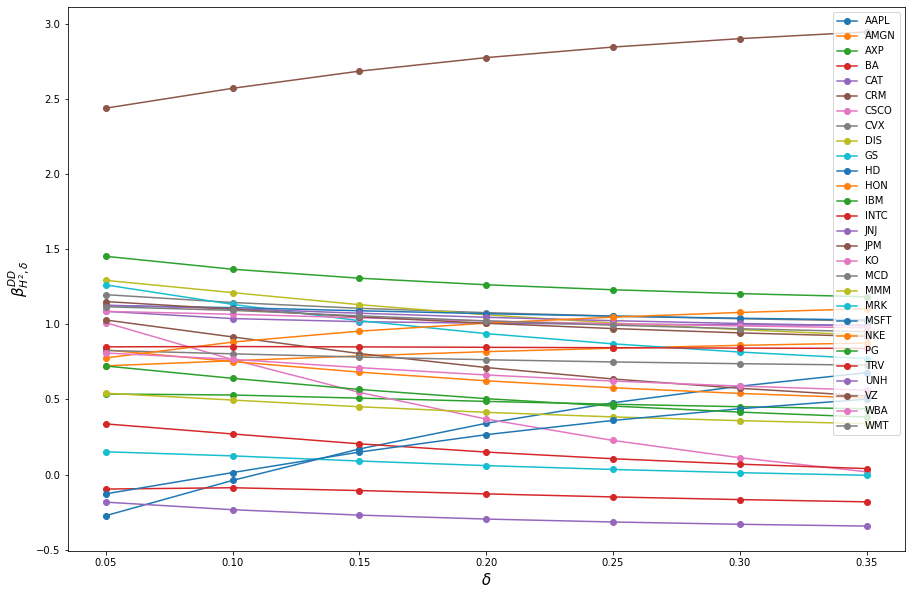}
    \caption{2013}
\end{subfigure}%
\begin{subfigure}{.5\textwidth}
  \centering
  \includegraphics[width=1\textwidth]{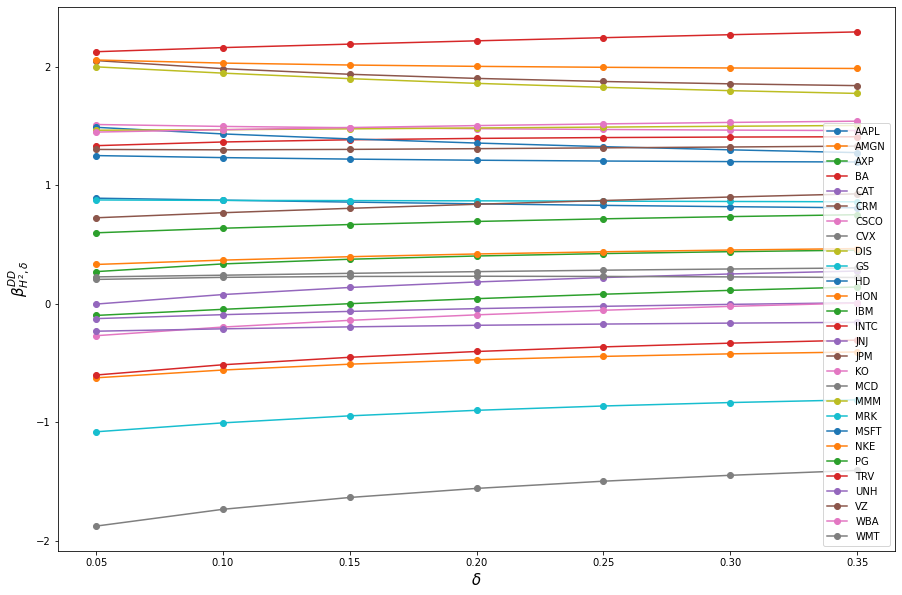}
    \caption{2022}
\end{subfigure}%
\caption{$\beta^{DD}_{H^2,\delta}$ Drift against $\delta$ for DOW 30 Stocks in 2013}
\label{fig:HDBeta_move}
\end{figure}

\begin{table}[!ht]

    \centering
      \caption{Different Risk Level Hellinger-Drawdown Betas for DOW 30 Stocks in 2013}
      \scalebox{0.9}{
       \begin{tabular}{|| c | c | c | c | c | c | c | c ||} 
 \hline
  & $\rho^{DD}_{H^2,0.05}$ & $\rho^{DD}_{H^2,0.1}$ & $\rho^{DD}_{H^2,0.15}$ & $\rho^{DD}_{H^2,0.2}$ & $\rho^{DD}_{H^2,0.25}$ & $\rho^{DD}_{H^2,0.3}$ & $\rho^{DD}_{H^2,0.35}$ \\ 
\hline
AAPL & -0.2736 & -0.0389 & 0.1706 & 0.3412 & 0.4779 & 0.5882 & 0.6781 \\
\hline
AMGN & 0.7235 & 0.7576 & 0.7901 & 0.818 & 0.8411 & 0.86 & 0.8756 \\
\hline
AXP & 0.5362 & 0.5296 & 0.5086 & 0.487 & 0.4679 & 0.4515 & 0.4378 \\
\hline
BA & -0.0965 & -0.0877 & -0.1064 & -0.1286 & -0.149 & -0.1666 & -0.1816 \\
\hline
CAT & 1.0864 & 1.0382 & 1.0177 & 1.0071 & 1.0008 & 0.9967 & 0.9939 \\
\hline
CRM & 2.4388 & 2.571 & 2.6852 & 2.7751 & 2.8455 & 2.9014 & 2.9463 \\
\hline
CSCO & 1.0111 & 0.7651 & 0.547 & 0.3693 & 0.2269 & 0.1119 & 0.0182 \\
\hline
CVX & 1.1974 & 1.1451 & 1.1067 & 1.0776 & 1.055 & 1.0371 & 1.0227 \\
\hline
DIS & 1.2926 & 1.2109 & 1.1307 & 1.0635 & 1.0088 & 0.9644 & 0.928 \\
\hline
GS & 1.2623 & 1.1326 & 1.0242 & 0.9381 & 0.87 & 0.8154 & 0.7713 \\
\hline
HD & 1.1244 & 1.1111 & 1.0903 & 1.0709 & 1.0543 & 1.0405 & 1.029 \\
\hline
HON & 0.8268 & 0.75 & 0.6801 & 0.6227 & 0.5765 & 0.5391 & 0.5086 \\
\hline
IBM & 1.4503 & 1.3647 & 1.3055 & 1.262 & 1.229 & 1.2033 & 1.1828 \\
\hline
INTC & 0.3372 & 0.2702 & 0.2046 & 0.1497 & 0.1052 & 0.0692 & 0.0397 \\
\hline
JNJ & 1.1276 & 1.1041 & 1.0739 & 1.0468 & 1.0242 & 1.0056 & 0.9902 \\
\hline
JPM & 1.0288 & 0.916 & 0.8055 & 0.7123 & 0.6362 & 0.5742 & 0.5234 \\
\hline
KO & 1.0862 & 1.067 & 1.0433 & 1.0221 & 1.0042 & 0.9894 & 0.9771 \\
\hline
MCD & 0.8262 & 0.8037 & 0.7821 & 0.7641 & 0.7497 & 0.7379 & 0.7283 \\
\hline
MMM & 0.5414 & 0.4959 & 0.4514 & 0.4141 & 0.3837 & 0.3589 & 0.3386 \\
\hline
MRK & 0.147 & 0.1191 & 0.0845 & 0.0534 & 0.0274 & 0.0059 & -0.0119 \\
\hline
MSFT & -0.1284 & 0.0132 & 0.1497 & 0.2655 & 0.3604 & 0.4381 & 0.502 \\
\hline
NKE & 0.7762 & 0.8818 & 0.9546 & 1.0077 & 1.0478 & 1.079 & 1.1038 \\
\hline
PG & 0.7226 & 0.6405 & 0.5664 & 0.5052 & 0.4558 & 0.4157 & 0.3829 \\
\hline
TRV & 0.8501 & 0.852 & 0.8498 & 0.847 & 0.8442 & 0.8419 & 0.8398 \\
\hline
UNH & -0.2243 & -0.281 & -0.3226 & -0.3534 & -0.3767 & -0.3948 & -0.4092 \\
\hline
VZ & 1.1515 & 1.1007 & 1.0499 & 1.0069 & 0.9717 & 0.943 & 0.9195 \\
\hline
WBA & 0.809 & 0.7669 & 0.7117 & 0.6628 & 0.6223 & 0.5891 & 0.5618 \\
\hline
WMT & 1.117 & 1.0928 & 1.0564 & 1.0228 & 0.9943 & 0.9706 & 0.951 \\
\hline
\end{tabular}}
\label{tab5}
\end{table}

\begin{table}[!ht]

    \centering
      \caption{Different Risk Level Hellinger-Drawdown Betas for DOW 30 Stocks in 2022}
      \scalebox{0.9}{
       \begin{tabular}{|| c | c | c | c | c | c | c | c ||} 
 \hline
  & $\rho^{DD}_{H^2,0.05}$ & $\rho^{DD}_{H^2,0.1}$ & $\rho^{DD}_{H^2,0.15}$ & $\rho^{DD}_{H^2,0.2}$ & $\rho^{DD}_{H^2,0.25}$ & $\rho^{DD}_{H^2,0.3}$ & $\rho^{DD}_{H^2,0.35}$ \\ 
  \hline
AAPL & 0.8916 & 0.8749 & 0.8586 & 0.8439 & 0.8309 & 0.8198 & 0.8104 \\
\hline
AMGN & -0.6252 & -0.5594 & -0.5101 & -0.4727 & -0.4444 & -0.4229 & -0.4068 \\
\hline
AXP & 0.2715 & 0.3366 & 0.3761 & 0.4034 & 0.4239 & 0.4399 & 0.4529 \\
\hline
BA & 1.3351 & 1.3661 & 1.3848 & 1.3964 & 1.4035 & 1.4078 & 1.4101 \\
\hline
CAT & -0.0025 & 0.0779 & 0.1377 & 0.1843 & 0.2215 & 0.2517 & 0.2765 \\
\hline
CRM & 2.0536 & 1.9853 & 1.9378 & 1.903 & 1.8769 & 1.8571 & 1.842 \\
\hline
CSCO & 1.5137 & 1.4975 & 1.4866 & 1.4783 & 1.4717 & 1.4665 & 1.4624 \\
\hline
CVX & -1.8773 & -1.7355 & -1.6356 & -1.5591 & -1.4982 & -1.4484 & -1.407 \\
\hline
DIS & 2.0005 & 1.9475 & 1.9009 & 1.8611 & 1.8276 & 1.7995 & 1.7762 \\
\hline
GS & 0.8766 & 0.8727 & 0.8712 & 0.8692 & 0.8666 & 0.8638 & 0.8612 \\
\hline
HD & 1.4899 & 1.4341 & 1.3922 & 1.3571 & 1.3266 & 1.2999 & 1.2764 \\
\hline
HON & 0.332 & 0.3685 & 0.3975 & 0.4205 & 0.4389 & 0.4538 & 0.4659 \\
\hline
IBM & -0.0984 & -0.047 & 0.0007 & 0.0432 & 0.0806 & 0.1133 & 0.1418 \\
\hline
INTC & 2.1284 & 2.1631 & 2.1925 & 2.2203 & 2.2468 & 2.272 & 2.2956 \\
\hline
JNJ & -0.1253 & -0.092 & -0.0643 & -0.041 & -0.0215 & -0.0051 & 0.0087 \\
\hline
JPM & 1.3031 & 1.2999 & 1.3036 & 1.3096 & 1.3166 & 1.3238 & 1.3313 \\
\hline
KO & -0.2696 & -0.1966 & -0.1399 & -0.0935 & -0.0547 & -0.0218 & 0.0062 \\
\hline
MCD & 0.227 & 0.2411 & 0.257 & 0.2713 & 0.2835 & 0.2937 & 0.3023 \\
\hline
MMM & 1.4654 & 1.4701 & 1.4769 & 1.4843 & 1.4918 & 1.4989 & 1.5056 \\
\hline
MRK & -1.0809 & -1.0056 & -0.9463 & -0.8997 & -0.8631 & -0.8341 & -0.8111 \\
\hline
MSFT & 1.2513 & 1.2341 & 1.2216 & 1.2122 & 1.2053 & 1.2005 & 1.1973 \\
\hline
NKE & 2.0585 & 2.0324 & 2.016 & 2.0047 & 1.9968 & 1.991 & 1.9868 \\
\hline
PG & 0.5992 & 0.6378 & 0.669 & 0.6951 & 0.7172 & 0.736 & 0.7521 \\
\hline
TRV & -0.6016 & -0.5149 & -0.4519 & -0.403 & -0.3641 & -0.3325 & -0.3067 \\
\hline
UNH & -0.2317 & -0.2117 & -0.195 & -0.1818 & -0.1714 & -0.1633 & -0.1568 \\
\hline
VZ & 0.7261 & 0.7689 & 0.8059 & 0.8402 & 0.872 & 0.9014 & 0.9283 \\
\hline
WBA & 1.4498 & 1.4697 & 1.4881 & 1.5046 & 1.5189 & 1.5313 & 1.5418 \\
\hline
WMT & 0.2057 & 0.223 & 0.2306 & 0.2324 & 0.2308 & 0.227 & 0.2219 \\
\hline
\end{tabular}}
\label{tab6}
\end{table}

\clearpage
\newpage

\section{Conclusions}

This paper builds on using the class of f-divergence induced coherent risk measures \cite{DP20} for portfolio optimization and derives its necessary optimality conditions formulated in the CAPM format. We derived a new f-Beta  similar to the Standard Beta, CVaR Beta, and previous works in CDaR Beta \cite{13} and ERoD Beta \cite{DU22}. The f-Beta evaluates portfolio performance under an optimally perturbed market probability measure, and this family of Beta metrics gives various degrees of flexibility and interpretability. In particular, the choice of divergence-generating function $\phi$ controls the shape of the risk function, which dictates the sensitivity of the proposed Beta metric to changes in the risk aversion degree parameter $\delta$. We conducted a numerical experiment using DOW 30 stocks returns against a chosen optimal market portfolio to demonstrate the new perspectives provided by Hellinger-Beta as compared with Standard Beta and Drawdown Betas, based on choosing the squared Hellinger distance to be the particular choice of f-divergence function in the general f-divergence induced risk measures and f-Betas. Our results show that the Hellinger-Betas (calculated based on deviations) provide different perspectives on the market from previous metrics such as CDaR Beta, ERoD Beta, and Standard Beta. For various choices of risk aversion parameter $\delta$, the Beta values generally have a monotonic shift behavior which may indicate the positive or negative relation it has with the market when the market moves into more stressful scenarios. Such a relation can be useful for hedging purposes. Similar to risk measures on drawdown observations and related Beta metrics (\citet{2}, \citet{13}, \citet{DU22}), we can develop f-divergence induced drawdown risk measures and Drawdown Betas based on a chosen risk measure from a generating f-divergence function. This can be seen as a similar approach to CDaR Beta \cite{13} and ERoD Beta \cite{DU22} as proposed in earlier works, but now with the family of f-divergence induced risk measures providing more flexibility and richer risk aversion behaviors than the CVaR risk measures. Similarly, we provided numerical results for Hellinger-Drawdown Beta by choosing the squared Hellinger distance to be the generating f-divergence.

Future works include studying the impact of the choice of f-divergences on the resulting Beta metric. Based on their statistical properties, the squared Hellinger distance is chosen in this study to demonstrate the behavior of the family of f-divergences since they provide bounds for meaningful statistical estimation and have various desirable properties, such as being symmetric and bounded. However, a comparison with other choices of f-divergence functions (and their induced risk measures for the CAPM setting) will reveal further insights into how the shape of the risk function impacts the sensitivity of the resulting f-Beta metrics. In particular, we can study if there is an overarching optimization framework under which the choice of $\alpha$ in the family of $\alpha$-divergences can be jointly optimized under some criteria. The relation between f-divergence induced risk measures and CVaR risks from the spectral risk measure point of view also suggests that we can compare and understand f-Betas and CVaR Betas in a more systematic way. In recent work, \citet{FW22} proposed a general framework to construct specific divergence-induced risk measures that exhibit a desired tail sensitivity. Consequently, this result can be potentially used in our framework to obtain f-Betas that have a more explainable and controlled risk aversion behavior. Another interesting future work would be to quantify the incremental changes in the f-Beta values as the risk aversion parameter increases and use the generated marginal increment/decrement as a measure of directional movement between the asset and the market. This can be a more useful metric for hedging purposes under higher-stress situations. Finally, we propose to study further the behaviors of f-Drawdown Betas as compared with other drawdown-based Betas and quantify their incremental changes with respect to risk aversion parameter changes as well.

\section*{Data Availability Statement}
Data for the case study were downloaded from Yahoo Finance \url{https://finance.yahoo.com/}. 

\section*{Acknowledgements}
The author thanks Dr. Andrew Mullhaupt (Stony Brook University), Dr. Stan Uryasev (Stony Brook University), Dr. Michael Albert (University of Virginia), and anonymous reviewers for valuable discussions and comments.


\begin{thebibliography}{999}
\bibitem[Ahmadi-Javid (2012)]{A12} {\sc A. Ahmadi-Javid}, 2012. {\em Entropic Value-at-Risk: A New Coherent Risk Measure}, J. Optim. Theory Appl., 155(3), pp. 1105–1123.

\bibitem[Ahmadi-Javid and Fallah-Tafti (2019)]{AF19} {\sc A. Ahmadi-Javid and M. Fallah-Tafti}, 2019. {\em Portfolio Optimization with Entropic Value-at-Risk}, Eur. J. Oper. Res., 279(1), pp. 225-241,

 \bibitem[Baghdadabad et al.(2013)]{Baghdadabad}{\sc M. R. T. Baghdadabad, F. M. Nor, and I. Ibrahim}, 2013. {\em Mean-Drawdown Risk Behavior: Drawdown Risk and Capital Asset Pricing}, J. Bus. Econ. Manag., 14, pp. S447-S469.
 
\bibitem[Belousov and Peters(2018)]{BP18} {\sc B. Belousov and J. Peters}, 2018. {\em f-Divergence Constrained Policy Improvement}, arXiv:1801.00056.

 \bibitem[Chekhlov et al.(2003)]{1} {\sc A. Chekhlov, S. Uryasev, and M. Zabarankin}, 2003. {\em Portfolio Optimization with Drawdown Constraints},  B. Scherer (Ed.) Asset and Liability Management Tools, Risk Books, London. 


\bibitem[Chekhlov et al.(2005)]{2}{\sc
A. Chekhlov, S. Uryasev, and M. Zabarankin}, 2005. {\em Drawdown Measure in Portfolio Optimization}, Int. J. Theor. Appl. Finance, 8, pp. 13-58.

\bibitem[Cichocki and Amari(2010)]{CA10} {\sc A. Cichocki and S. Amari}, 2010. {\em Families of Alpha- Beta- and Gamma- Divergences: Flexible and Robust Measures of Similarities}, Entropy, 12, pp. 1532-1568.

\bibitem[Ding and Uryasev(2020)]{Ding} {\sc R. Ding and S. Uryasev}, 2020. {\em CoCDaR and mCoCDaR: New Approach for Measurement of Systemic Risk Contributions}, J. Risk. Financial. Manag., 13, pp. 270-287.

\bibitem[Ding and Uryasev(2022)]{DU22} {\sc R. Ding and S. Uryasev}, 2022. {\em Drawdown Beta and Portfolio Optimization}, Quantitative Finance, 22(7), pp. 1265-1276.

\bibitem[Dommel and Pichler(2020)]{DP20} {\sc P. Dommel and A. Pichler}, 2020. {\em Convex Risk Measures based on Divergence}, arXiv:2003.07648.

\bibitem[Drawdown Beta Website(2021)]{DBW} {\sc Drawdown Beta Website}, 2021. \url{http://qfdb.ams.stonybrook.edu/index_SP.html}, Quantitative Finance Program at Stony Brook University.

\bibitem[Frohlich and Williamson(2022)]{FW22}
{\sc C. Fröhlich and R. C. Williamson}, 2022. {\em Tailoring to the Tails: Risk Measures for Fine-Grained Tail Sensitivity}, arXiv:2208.03066.


\bibitem[Galagedera(2004)]{G2004} {\sc D. Galagedera}, 2007. {\em A Review of Capital Asset Pricing Model}, Managerial Finance, 33, pp. 821-832.


\bibitem[Goldberg and Mahmoud(2017)]{G2017}
{\sc L. R. Goldberg and O. Mahmoud}, 2017. {\em Drawdown: from Practice to Theory and Back Again}, Math. Finan. Econ., 11, pp. 275–297.


\bibitem[Krokhmal et al.(2011)]{15} {\sc P. Krokhmal, M. Zabarankin, and S. Uryasev}, 2011. {\em Modeling and Optimization of Risk}, Surv. Oper. Res. Manag. Sci., 16, pp. 49-66.

\bibitem[Kullback and Leibler(1951)]{KL} {\sc S. Kulllback and R. A. Leibler}, 1951. {\em On Information and Sufficiency}, Ann. Math. Statistics, 22(1), pp. 79-86, 3.

\bibitem[Markowitz(1959)]{Mark1959} {\sc H. M. Markowitz}, 1959. {\em Portfolio Selection: Efficient Diversification of Investments}, Yale University Press.

\bibitem[Markowitz(1991)]{4} {\sc H. M. Markowitz}, 1991. {\em Foundations of Portfolio Theory}, J. Finance, 46, pp. 469-477. 

\bibitem[Mullhaupt(2022)]{M22} {\sc A. Mullhaupt}, 2022. {Personal communications}.

\bibitem[Nowozin et al.(2016)]{N16} {\sc S. Nowozin, B. Cseke, and R. Tomioka}, {\em f-GAN: Training Generative Neural Samplers using Variational Divergence Minimization}, arXiv:1606.00709, 2016. 


\bibitem[Rockafellar and Uryasev(2000)]{5}
{\sc R. T. Rockafellar and S. Uryasev}, {\em Optimization of Conditional Value-at-Risk}, J. Risk, 2 (2000), pp. 21-42.

\bibitem[Rockafellar and Uryasev(2002)]{6}
{\sc R. T. Rockafellar and S. Uryasev}, {\em Conditional~Value-at-Risk for General Loss Distributions}, J. Bank. Finance, 26 (2002), pp. 1443-1471.

\bibitem[Rockafellar et al.(2002)]{7} {\sc R. T. Rockafellar, S. Uryasev, and M. Zabarankin}, 2002. {\em Deviation Measures in Risk Analysis and Optimization}, SSRN Electronic Journal, 10.2139/ssrn.365640.

\bibitem[Rockafellar et al.(2006)]{14} {\sc R. T. Rockafellar, S. Uryasev, and M. Zabarankin}, 2006. {\em Optimality Conditions in Portfolio Analysis with Generalized Deviation Measures}, Math. Program., 108, pp. 515-540.

\bibitem[Rockafellar and Uryasev(2014)]{8}
{\sc R. T. Rockafellar and S. Uryasev}, 2013. {\em The~Fundamental Risk Quadrangle in Risk Management, Optimization and Statistical Estimation}, Surv. Oper. Res. Manag. Sci., 18, pp. 33-53.

\bibitem[Rockafellar et al.(2014)]{9}
{\sc R. T. Rockafellar, O. R. Johannes, and S. I. Miranda}, 2014. {\em Superquantile Regression with Applications to Buffered Reliability, Uncertainty~Quantification and Conditional Value-at-Risk}, European J. Oper. Res., 234, pp. 140-154.


\bibitem[Rossi(2016)]{R2016} {\sc M. Rossi}, 2016. {\em The Capital Asset Pricing Model: A Critical Literature Review}, Glob. Bus. Econ. Rev., 18, pp. 604. 

\bibitem[Ruszczynski(2011)]{16} {\sc A. Ruszczynski}, 2011. {\em Nonlinear Optimization}, Princeton University Press.

\bibitem[Sharpe(1964)]{10}
{\sc W. F. Sharpe}, 1964. {\em Capital Asset Prices: A Theory of Market Equilibrium under Condition of Risk}, J. Finance, 19, pp. 425-442.

\bibitem[Sharpe(1992)]{11}
{\sc W. F. Sharpe}, 1992. {\em Asset~allocation: Management Style and Performance Measurement}, J. Portf. Manag., 18, pp. 7-19.

\bibitem[Testuri and Uryasev(2004)]{12} {\sc C. Testuri and S. Uryasev}, 2004. {\em On Relation Between Expected Regret and Conditional Value at Risk}, S. Rachev (Ed.) Handbook of Computational and Numerical Methods in Finance, Birkhauser, pp. 361-373. 


\bibitem[Yang and Le Cam(2000)]{YL} {\sc G. L. Yang  and L. Le Cam}, 2000. {\em Asymptotics in Statistics: Some Basic Concepts}, Berlin: Springer.

\bibitem[Zabarankin et al.(2014)]{13}
{\sc M. Zabarankin, P. Konstantin, and S. Uryasev}, 2014. {\em Capital Asset Pricing Model (CAPM) with Drawdown Measure}, European J. Oper. Res., 234, pp. 508-517.



\end{thebibliography}
\end{document}